\pgfplotsset{compat=1.6} 
\begin{document}
\title{Provisioning Time-Based Subscription in NDN: A Secure and Efficient Access Control Scheme}
%
%
\author{ 
Nazatul Sultan\inst{1} \and
Chandan Kumar \inst{2} \and
Saurab Dulal \inst{3} \and
Vijay Varadharajan\inst{4} \and
Seyit Camtepe \inst{1}\and 
Surya Nepal\inst{1}
}
%
%
\institute{
CSIRO's Data61, Australia\and
IIT Kharagpur, India \and
University of Memphis, USA \and
University of Newcastle, Australia  \\
\email{[Nazatul.Sultan; Seyit.Camtepe; Surya.Nepal]@data61.csiro.au}\\
\email{Vijay.Varadharajan@newcastle.edu.au}
}
\maketitle              
\vspace{-.3cm}
\begin{abstract}
This paper proposes a novel encryption-based access control mechanism for Named Data Networking (NDN). The scheme allows data producers to share their content in encrypted form before transmitting it to consumers. The encryption mechanism incorporates time-based subscription access policies directly into the encrypted content, enabling only consumers with valid subscriptions to decrypt it. This makes the scheme well-suited for real-world, subscription-based applications like Netflix. Additionally, the scheme introduces an anonymous and unlinkable signature-based authentication mechanism that empowers edge routers to block bogus content requests at the network's entry point, thereby mitigating Denial of Service (DoS) attacks. A formal security proof demonstrates the scheme’s resistance to Chosen Plaintext Attacks (CPA). Performance analysis, using Mini-NDN-based emulation and a Charm library implementation, further confirms the practicality of the scheme. Moreover, it outperforms closely related works in terms of functionality, security, and communication overhead.

\keywords{Named Data Networking \and Access Control\and Subscription-Based Services\and Revocation\and Encryption\and Provable Security.}
\end{abstract}

\section{Introduction}
\label{intro}
Named Data Networking (NDN) is regarded as a promising future Internet architecture designed to overcome the limitations of traditional IP-based networking architecture \cite{Zhang2014}. The core idea behind NDN is to make data the central focus of network design, rather than the data source. This paradigm shift enables the NDN architecture to address several key issues inherent in IP-based networks, such as bandwidth inefficiencies, security vulnerabilities, scalability challenges, and mobility constraints \cite{Jacobson2006}. Due to these advantages, NDN has garnered considerable attention from researchers in both academia and industry \cite{DJAMA202037}.

\par 
In NDN, every piece of content is identified by a unique \emph{name}. Data packets are cryptographically signed and optionally encrypted by the content provider (or data producer) under this name. The signing process ensures both the integrity and provenance of the data, allowing recipients to verify the content without requiring additional security mechanisms\footnote{In IP-based networks, transport layer security protocols such as SSL and TLS are used to verify content}. Consequently, signed data packets can be distributed and verified independently of their original source, enabling intermediate parties to serve these packets without compromising their security. A consumer sends an \emph{interest packet} for a specific piece of data by specifying its name, and the NDN network forwards this interest toward the data producer. Since NDN routers are equipped with caches, they can respond to the interest if the requested data is stored in their cache; otherwise, the interest is forwarded to the data producer. The data packet then follows the reverse path of the interest packet to reach the consumer \cite{Zhang2014}. As routers forward data packets, they cache copies of them for future requests. This in-network caching significantly improves resource utilization, including network bandwidth, cache space, and content delivery latency, while also enhancing scalability and mobility \cite{Zhang2014}.

\par 
However, while the in-network caching capability of NDN provides many benefits, it also introduces vulnerabilities. Adversaries can compromise routers to access cached data packets. Additionally, since routers can serve content without directly contacting the data producer, unauthorized consumers may obtain content without the producer's knowledge, which can negatively impact business models reliant on controlled access. Therefore, implementing an effective access control mechanism is crucial for the successful deployment of NDN \cite{Tourani2018}. In recent years, various research efforts have addressed the access control problem in data-centric networks, such as \cite{Li2015}, \cite{Fotiou2016}, \cite{Suksomboon2018}, \cite{Bilal2019}, \cite{Misra2019}, \cite{Xia2019}, \cite{Xue2019}, \cite{SultanESORICS}, \cite{SultanSRDS}, \cite{He2021}, \cite{ZHU2020607}. However, most of these existing schemes are account-based, meaning they assume that when a consumer leaves the system, their authorization to access resources terminates. While this is effective in static environments, account-based schemes struggle to support dynamic scenarios where resources are continuously created, and users frequently join and leave the system \cite{Vimercati2012}. For example, account-based approaches are not well-suited for subscription-based access control, particularly time-based subscriptions. Subscription-based access control is a widely used mechanism in modern services. For example, in video streaming platforms like Netflix (\url{www.netflix.com}) and Disney+ (\url{www.disneyplus.com}), data producers offer time-based subscriptions where users can access content for a specific period, such as a week, month, or year. Once the subscription expires, access is revoked. Account-based schemes \cite{Li2015}, \cite{Fotiou2016}, \cite{Suksomboon2018}, \cite{Bilal2019}, \cite{Misra2019}, \cite{Xue2019}, \cite{SultanESORICS}, \cite{SultanSRDS}, \cite{He2021} are inefficient in handling time-based subscriptions due to the frequent need for privilege revocation\footnote{Privilege revocation typically requires updating secret keys for consumers and re-encrypting the data, which is a costly operation \cite{SultanESORICS}.}. 

\par 
To address these challenges, we propose an encryption-based access control mechanism that embeds time-based subscription policies directly into the encrypted content. Only consumers with valid subscriptions associated with these policies can decrypt the content. Our scheme efficiently handles dynamic environments, such as video streaming services like Netflix and Disney+, where authorization is granted for a fixed time period.
\par 
The major contributions of this paper are summarized as follows:
\begin{itemize}
    \item We introduce a novel encryption-based access control scheme, tailored for producers like Netflix and Disney+ to implement time-based subscription policies. This scheme allows producers to offer various subscription options (e.g., day, week, month, half-year, or year) to provide consumers with flexible choices. By embedding time-based access policies directly into the encrypted data, only authorized consumers with a valid subscription can decrypt the content. Our scheme supports multiple subscription types (e.g., day, week, month, year) within the same ciphertext, enabling different consumer groups to select their preferred subscription duration, enhancing scalability and flexibility.
    
    \item Our scheme also presents a signature-based authentication mechanism that enables the edge routers to verify the authenticity of an \emph{interest request} of the consumers before forwarding that request into the network. This is essential to prevent Denial of Service (DoS) Attacks in NDN, which can occur by sending bogus interest requests by malicious entities. Further, the signature mechanism does not reveal the real identity of the requested consumers, which provides consumer anonymity and unlinkability\footnote{The edge routers cannot establish a link between two successive interest requests from the same consumer.}. 
    \item Our scheme also gives formal security proof, which demonstrates its security against Chosen Plaintext Attacks (CPA).
    
    \item We carried out a thorough performance analysis of our scheme by implementing it with the Charm library (\url{www.charm-crypto.io/}) and emulating it using the Mini-NDN tool\footnote{Mini-NDN is a lightweight networking emulation tool based on Mininet (www.github.com/named-data/mini-ndn).}. The results show that our scheme has practical efficiency in terms of computation and communication overhead, making it suitable for real-world NDN applications.
\end{itemize}

The organization of the rest of this paper is as follows: section \ref{related-work} presents a brief overview of closely related works. To increase the clarity of our scheme, we present some preliminaries in Section \ref{preliminaries}. The proposed model (system, adversary, and system models) of our scheme is presented in Section \ref{proposed-model}. In Section \ref{proposed-scheme}, we first present an overview of the proposed scheme followed by its main construction. Section \ref{security-analysis} and Section \ref{performance-analysis} present the security and performance analysis of our scheme, respectively. Finally, Section \ref{conclusion} concludes this paper.

\section{Related Work}
\label{related-work}
In this section, we first present a brief overview of notable encryption-based access control schemes in NDN, followed by relevant related works in key assignment, time-specific encryption, and attribute-based encryption schemes.
\subsection{Access Control Schemes for NDN}
\label{related-work-NDN}
The existing access control schemes for NDN can be divided into two main categories, namely \emph{account-based} and \emph{temporal-based}. Some of the examples of recent account-based schemes are \cite{Li2015}, \cite{Fotiou2016}, \cite{Suksomboon2018},  \cite{Bilal2019},  \cite{Misra2019}, \cite{Xue2019}, \cite{SultanESORICS}, \cite{SultanSRDS}, \cite{He2021}, \cite{He2018}, \cite{Tourani2018a}, \cite{Nunes2018},  \cite{Zhijun2019}, \cite{YANG201921}. We observed that, unlike our scheme, \cite{Bilal2019}, \cite{Tourani2018a}, \cite{Nunes2018}, \cite{Tseng2019} limit the utilization of the NDN resources because the content that is shared for a consumer cannot be accessed by other authorized consumers \cite{SultanSRDS}. On the other hand, in \cite{Zhijun2019}, the consumer needs to contact the producer on every data request. It performs the consumer revocation using a hash table filtering technique, which does not provide privacy. Further, the producer needs to send the updated hash table to each router in the system on every revocation operation, which can increase heavy communication in the system and storage overhead at the routers. Unlike our scheme, there is no consumer interest request verification at the edge router and consumer privacy protection mechanism in \cite{YANG201921}, which can flood the network with bogus interest requests. Further, \cite{YANG201921} also does not provide any type of consumer revocation mechanism. \cite{He2018} does not provide privacy to the consumers, as the signatures reveal the identities of the consumers which eventually also reveals the access pattern of individual consumers. \cite{He2018} also vulnerable to collusion attacks between a revoked consumer and edge routers \cite{SultanESORICS}. The other mentioned schemes like \cite{Li2015}, \cite{Fotiou2016}, \cite{Suksomboon2018}, \cite{Misra2019}, \cite{Xue2019}, \cite{SultanESORICS}, \cite{SultanSRDS}, \cite{He2021} require an expensive (in terms of communication and computation overhead) revocation process. As such, these schemes cannot easily support a highly dynamic environment, where consumers can leave and join the system at any time.
\par 
The temporal-based schemes \cite{Xia2019}, \cite{ZHU2020607} can easily fit into a dynamic environment where consumers' access privileges are given based on some subscription periods. Once their subscription time/period expires, the consumers can automatically revoke from the system. However, \cite{Xia2019} uses a challenge-response protocol for authenticating the consumers at the edge router side for preventing DoS attacks, and the challenge-response protocol increases communication overhead. In \cite{ZHU2020607}, the content producer requires to send the updated proxy re-encryption keys to the content delivery server(s) on each revocation event. Further, it does not provide privacy to the consumers and DoS resistance. Moreover, \cite{Xia2019} and \cite{ZHU2020607} cannot easily embed multiple subscription periods in a single encrypted content. This, in turn, leads to an increase in communication overhead by broadcasting more than one encrypted version of the same content, and it also limits the utilization of NDN resources. 
\vspace{-.2cm}
\subsection{Key Assignment Scheme (KAS)}
\label{related-work:KAS}
The Key Assignment Scheme (KAS) aims to create a key derivation mechanism that allows a provider to encrypt its content using an access key while enabling authorized consumers to derive that access key using their secret keys and corresponding public information. Several key works in this area have been proposed, including notable contributions such as \cite{Atallah2007}, \cite{Crampton2011}, \cite{Vimercati2012}, \cite{Alderman2017}, and \cite{Ferrara2021}. In \cite{Atallah2007}, Atallah \emph{et al.} introduced a KAS method that incorporates access time intervals into the key derivation process. In this approach, only consumers with access rights for a specific time interval can derive the access key necessary for decryption. Crampton, in \cite{Crampton2011}, extended this concept by proposing a KAS that generalizes temporal and geospatial access control policies. This scheme efficiently enforces interval-based access control policies while maintaining low complexity in the enforcement process. Vimercati \emph{et al.} in \cite{Vimercati2012}, presented another KAS scheme allowing a service provider (such as a cloud service) to encrypt content based on subscription to specific time intervals. Building on this, Alderman \emph{et al.} in \cite{Alderman2017} proposed improvements aimed at reducing the amount of public information required in Vimercati’s scheme. In \cite{Ferrara2021}, Ferrara \emph{et al.} introduced a Verifiable Hierarchical Key Assignment Scheme, enabling consumers to verify public information before deriving the access key for decryption.
\par 
However, a common limitation across \cite{Atallah2007}, \cite{Crampton2011}, \cite{Vimercati2012}, \cite{Alderman2017}, and \cite{Ferrara2021} is that these schemes are primarily designed to use the same access key for all content within a specific time interval. For example, if the time interval is set as a day, the same access key is used to encrypt all content for that day. Introducing a different access key for each piece of content within the same time interval would significantly increase key management complexity, as noted in \cite{Zhou2013}.

\subsection{Time Specific Encryption (TSE)}
\label{Related-Works:TRE}
Time-Release Encryption (TRE) allows an encryptor to encrypt a message that cannot be decrypted, even by a legitimate receiver, until a specified release time. Notable works in TRE include \cite{Crescenzo1999}, \cite{Chan2005}, \cite{Cheon2006}, and \cite{Choi2020}. However, these schemes share a common drawback: the reliance on a trusted time server for broadcasting time-specific keys. This setup requires backup mechanisms to ensure that receivers do not miss key broadcasts, adding complexity.
\par 
To address these limitations, the TRE method was later generalized into a new concept called Time-Specific Encryption (TSE), as proposed in \cite{Paterson2010}, \cite{Kasamatsu2012}, and \cite{Ishizaka2020}. In \cite{Paterson2010}, Paterson \emph{et al.} introduced the generalized form of TRE, termed Time-Specific Encryption (TSE). In TSE, an encryptor can define a time interval for decryption, and a semi-trusted time server publishes a global system parameter and periodically issues time instant keys (TIK) that allow receivers to decrypt ciphertexts for specific time periods. This concept was integrated into the \emph{Identity-Based Encryption} (IBE) technique. Later, Kasamatsu \emph{et al.} in \cite{Kasamatsu2012} proposed an improved TSE scheme using forward-secure encryption (FSE) to reduce computation, communication, and storage overhead. In \cite{Ishizaka2020}, Ishizaka \emph{et al.} developed a generic TSE construction based on wildcarded Identity-Based Encryption (WIBE).
\par 
However, these TSE schemes—\cite{Paterson2010}, \cite{Kasamatsu2012}, and \cite{Ishizaka2020}—have a major limitation regarding the efficient utilization of Named Data Networking (NDN) resources. Similar to \cite{Tourani2018a}, \cite{Nunes2018}, \cite{Tseng2019}, and \cite{Bilal2019}, the content shared with one consumer in these schemes cannot be accessed by other authorized consumers, restricting scalability and resource efficiency.
\vspace{-.2cm}

\subsection{Attribute-Based Encryption (ABE)}
\label{related-work:abe}
Attribute-based encryption (ABE) is a public-key encryption technique that allows the encryptor to encrypt a message using attributes such as address, phone number, designation, department, etc. \cite{Sahai2005}. Authorized consumers who possess a certain threshold of attributes can decrypt the message. ABE is a well-explored area; however, most existing ABE schemes do not incorporate time into their access policies and cannot efficiently manage time due to the complexity of assigning and removing attributes dynamically. This limitation is particularly problematic in environments where time intervals change frequently, such as video content sharing in Named Data Networking (NDN), making it computationally and communication-wise expensive for consumers to acquire, remove, or re-grant attributes at different time intervals \cite{Yang2016}. 
\par 
Some notable works that integrate time into ABE are \cite{Zhu2012}, \cite{LIU2014}, \cite{Balani2014}, \cite{Yang2016}, and \cite{Liu2018}. In \cite{Zhu2012}, Zhu \emph{et al.} proposed a temporal ABE scheme utilizing cryptographic integer comparisons and a proxy-based re-encryption mechanism. Similarly, Liu \emph{et al.} in \cite{LIU2014} and Balani \emph{et al.} in \cite{Balani2014} introduced ABE schemes integrating the proxy re-encryption technique. However, the schemes in \cite{Zhu2012}, \cite{LIU2014}, and \cite{Balani2014} require an always-online proxy server to perform the re-encryption operations on consumer requests, which introduces additional overhead. Moreover, these schemes rely on computationally intensive cryptographic operations. In \cite{Yang2016}, Yang \emph{et al.} proposed a time-domain multi-authority ABE scheme that controls access to session keys used for encrypting video content. Time is embedded in both the ciphertext and decryption keys, allowing consumers with the appropriate attributes within a specific time interval to decrypt the content. However, the main drawback of this scheme is that consumers must update their decryption keys for each time interval, further burdening the system. Additionally, \cite{Yang2016} also relies on computationally heavy cryptographic operations. In \cite{Liu2018}, Liu \emph{et al.} addressed the revocation problem by incorporating direct revocation, embedding both the revocation list and authorized time intervals into the ciphertext. However, as the revocation list grows over time, it impacts the size of both the consumer's secret key and the ciphertext, making this scheme unsuitable for large-scale applications such as video streaming services.

	\begin{figure*}[t]
		\centering
		\fbox{\scalebox{5}{\includegraphics[width=1.8cm, height=1.2cm]{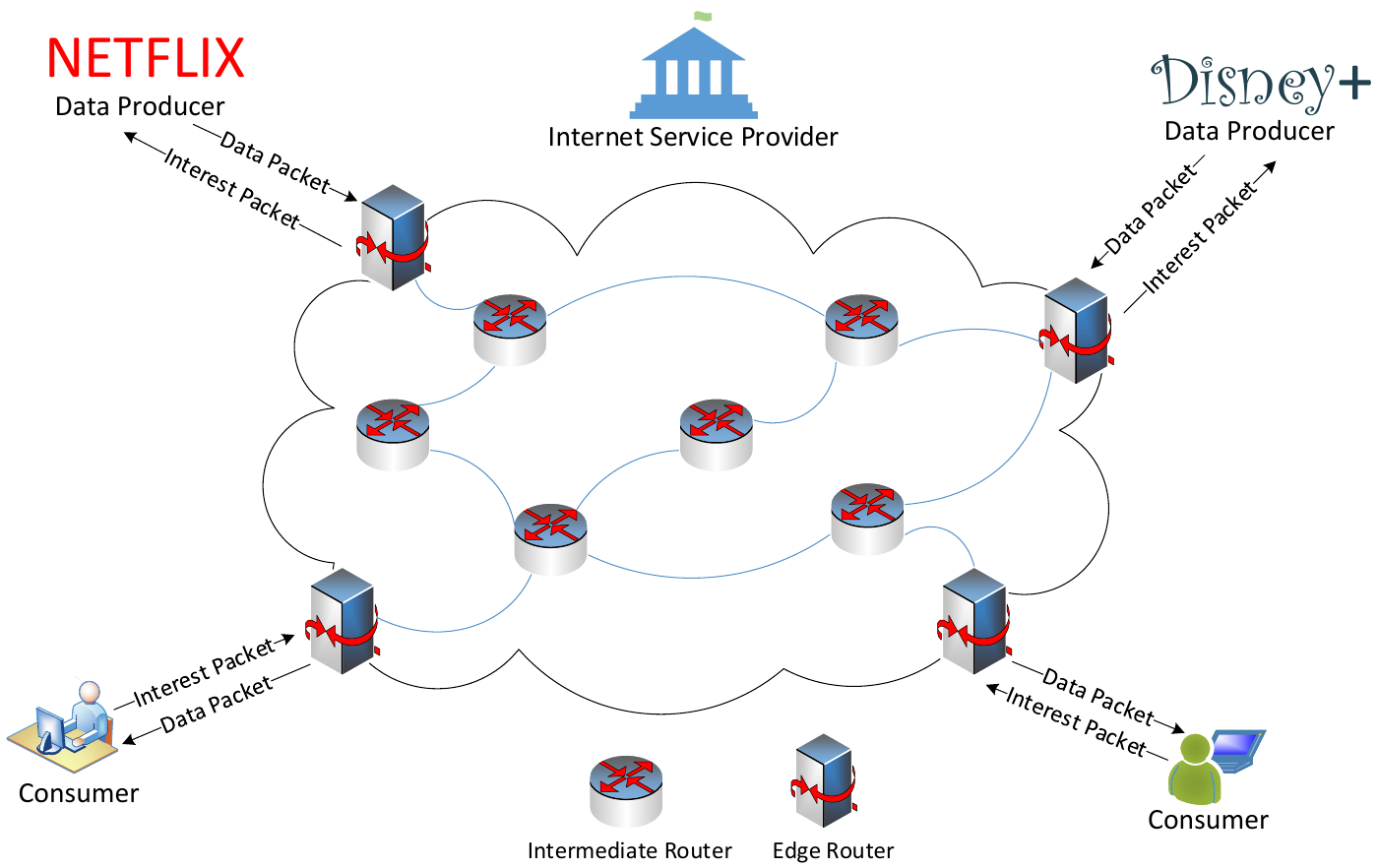}}}
		\caption{Our Proposed NDN Architecture \cite{SultanESORICS}, \cite{SultanSRDS}}
		\label{system_model1}
\vspace{-.3cm}
	\end{figure*}
 \vspace{-.2cm}
\section{Preliminaries}
\label{preliminaries}
\vspace{-.2cm}
In this section, we provide a brief overview of the mathematical concepts and complexity assumptions underlying our proposed scheme.
\vspace{-.2cm}
	\subsection{The Sibling Intractable Function Family (SIFF) \cite{Zheng1993}}
	\label{SIFF}
	SIFF is a technique used to share an encrypted message with a set of authorized users using a polynomial. For illustration, let's assume that we want to share a message, $m$ securely with $n$ authorized users. Each authorized user, $i$ possesses a secret, $\mathtt{sk_i}$. First, the plaintext message $m$ is encrypted using a symmetric key, say $\mathtt{SK}$. Afterward, we choose a polynomial, $P(x)= x^n+ a_1x^{n-1}+ ...+a_{j}x^{n-j}+ ...+ a_n$. The polynomial $P(x)$ can be chosen in such a way that it holds the equation $P(x)= \mathtt{SK}$, and uses $n$ keys of the authorized users as the solutions of the equation to work out all the coefficients. The encrypted message and the coefficients can be published publicly and each authorized user, say $i^{th}$ user can use $P(\mathtt{sk_i})$, for $1\le i\le n$, to compute the secret key $\mathtt{SK}$. This, in turn, enables the user to decrypt the ciphertext using the secret key $\mathtt{SK}$. Further details on SIFF can be found in \cite{Zheng1993}. In this paper, we use this technique to embed different subscription times in the encrypted content itself so that different groups of authorized consumers, subscribing to those subscription times, can decrypt. More details will be given in Section \ref{data-publication}.
	
\vspace{-.2cm}
	\subsection{Bilinear Pairing}
	\label{bilinear_map}
	Bilinear pairing is a function of the modified Weil/Tate pairing defined on the Elliptic Curve. Let $g$ be a random generator of $\mathbb{G}_1$. The Bilinear pairing $\hat{e}:\mathbb{G}_1\times \mathbb{G}_1\rightarrow \mathbb{G}_T$ has the following properties:
	\begin{itemize}
		\item \textit {Bilinear}: $\hat{e}(g^a, g^b)= \hat{e}(g, g)^{ab}$, $\forall g$ and $\forall a, b \in \mathbb{Z}_q^{*}$; where $\mathbb{Z}_q^{*}$ denotes the multiplicative group of $\mathbb{Z}_q$, the integer modulo $q$.
		\item \textit{Non-degenerate}: if $g$ generates $\mathbb{G}_1$, then $\hat{e}(g, g)$ generates $\mathbb{G}_T$.
		\item \textit{Computable}: there exists an efficient algorithm to compute $\hat{e}(g, g)$, for all $g\in \mathbb{G}_1$.
	\end{itemize}

\subsection{Complexity Assumption}
\label{assumption}
Our scheme is based on the following complexity assumption.
\vspace{-.2cm}
\subsubsection{Decisional Bilinear Diffie-Hellman (DBDH) Assumption}
	If $(a, b, c, z)\in \mathbb{Z}_q^*$ are chosen randomly, the ability for any polynomial-time adversary $\mathcal{A}$ to distinguish the tuples $\left<g, g^a, g^b, g^c, Z= \hat{e}(g, g)^{a\cdot b\cdot c}\right>$ and $\left<g, g^a, g^b, g^c, Z= \hat{e}(g, g)^z\right>$ is negligible.
\begin{figure*}[h]
	\centering
	\subfloat[A Sample Time-Based Subscription Access Policy Tree ($\mathcal{T}$)\label{sample-tree}]{\fbox{\scalebox{4.9}{\includegraphics[width=1.4cm, height=.8cm]{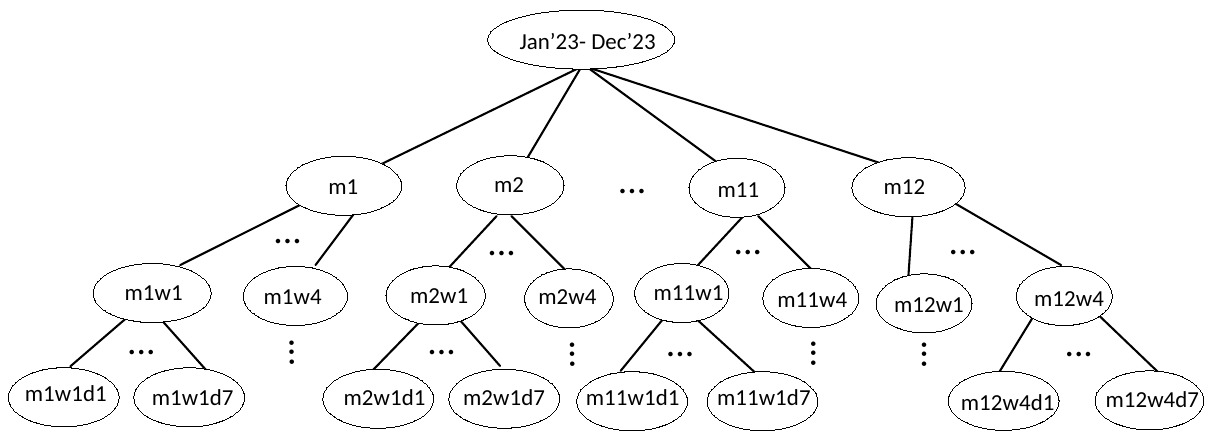}}}}
	~\subfloat[Single Node Illustration\label{node-key-tree}]{\fbox{\scalebox{3.5}{\includegraphics[width=1.2cm, height=1.12cm]{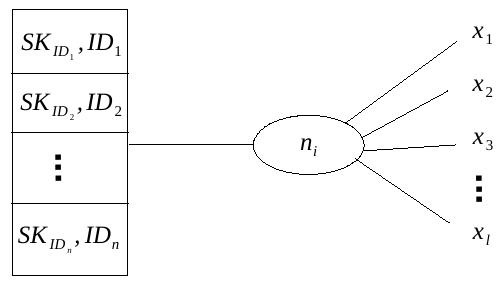}}}}
	\caption{Subscription Access Policy Tree}
 \vspace{-.2cm}
\end{figure*}
\vspace{-.8cm}
	\section{Our Proposed Model}
	\label{proposed-model}
In this section, we present the system, adversary, and security models of our proposed scheme.
\vspace{-.6cm}
	\subsection{System Model}
	\label{system-model}
 \vspace{-.3cm}
	Our system consists of three main entities: \emph{Data Producers}, \emph{Internet Service Providers} (ISPs), and \emph{Consumers}. A typical architecture of the proposed scheme is illustrated in Figure \ref{system_model1}.

    \begin{itemize}
        \item \textbf{Data Producers} are the trusted entities like Netflix, Amazon Prime Video (\url{www.primevideo.com}), and Disney+ that generate content for their subscribed consumers. To access content, a consumer must register with a producer and subscribe by paying a subscription fee. In exchange, the producer provides access rights in the form of private keys, which are issued to the consumer based on the subscription period(s).
        \item \textbf{Internet Service Providers} (ISPs) are responsible for maintaining the NDN network, which is composed of Edge Routers (ERs) and Intermediate Routers (IRs). These routers are equipped with cache memories to store copies of data packets. ERs and IRs forward consumers' interest packets toward their destinations and, when possible, satisfy requests by serving cached data packets, avoiding direct interaction with the producer. ERs also authenticate interest packets before forwarding them to the NDN network, helping to prevent malicious or invalid requests from entering the system.
        \item \textbf{Consumers} are entities that seek to access data from producers. To do so, a consumer must register with a producer and maintain a valid subscription.
    \end{itemize} 

	\subsection{Adversary Model}
	\label{adversary-model}
	Our proposed scheme considers the following adversaries:
\begin{itemize}
    \item \textbf{Malicious Consumers}: Consumers may act maliciously. Two or more revoked consumers, having insufficient access rights, may collude to gain access beyond their actual subscriptions.

    \item \textbf{Honest-but-Curious and Greedy ISPs}: Typically, an ISP is a reputable business organization aiming to increase revenue from producers by providing reliable services. It performs all assigned tasks honestly, creating incentives for producers to use its services. However, the ISP may also be interested in learning as much as possible about the plaintext content, individual consumer identities, and their access patterns. In this context, the ISP is considered an honest but curious entity. Moreover, ISPs can be perceived as greedy, as they might deceive content producers regarding the services provided to consumers for greater financial gain.

    \item \textbf{Malicious Routers}: Routers (i.e., edge and intermediate routers) are managed and maintained by the ISP. However, due to low physical security, an attacker could compromise these routers and leverage the compromised routers to collude with revoked consumers.
\end{itemize}

\vspace{-.5cm}
\subsection{Security Model}
	\label{security-model}

 The security model of our scheme is defined by the Semantic Security against Chosen Plaintext Attacks (IND-CPA). IND-CPA is a security game played between challenger $\mathcal{C}$ and adversary $\mathcal{A}$. We  present the IND-CPA security game in Appendix \ref{security_model_appendix}.

\vspace{-.5cm}
	
	\section{Our Proposed Scheme}
	\label{proposed-scheme}
\vspace{-.2cm}
	We first present an overview of our scheme in Section \ref{overview}, followed by its main construction in Section \ref{main-construction}. 
\vspace{-.4cm}
	 \subsection{Overview}
     \label{overview}
     \vspace{-.1cm}
     Suppose $\mathcal{P}$ is a data producer, and it has $\mathbb{S}$ set of resources. $\mathcal{P}$ allows a consumer to access its resources with a subscription. $\mathcal{P}$ maintains a time-based subscription policy and has several subscription options like day, week, month, and year. A consumer can opt for any one of those subscription options at a time. For example, a consumer can subscribe to the resources for one month from the day of subscription. This type of access control mechanism can be seen in popular video streaming services such as Netflix and DisneyPlus, where the consumers normally pay a certain amount of subscription fees and get access to the videos published by the streaming services for the whole subscribed duration. Note that, although our primary goal in this paper is to design a scheme to support a time-based subscription mechanism, our scheme can also easily be extended to support subscriptions to services defined by different criteria such as the topic of interest, geographical region and hierarchical contents \cite{SultanESORICS}.
    \par 
    To illustrate the time-based subscription policy and the process of granting access to resources for the subscribed consumers, we consider a sample time-based subscription access policy tree, $\mathcal{T}$ as shown in Figure \ref{sample-tree}. For simplicity, in our sample tree, the producer $\mathcal{P}$ maintains subscription options like day, week, month, and year\footnote{Note that our access policy tree can also support more fine-grained time units like hours and seconds, by increasing the height of the tree by one and two more levels, respectively.}. In Figure \ref{sample-tree}, the root node, internal nodes, and leaf nodes represent a particular year, either a month or week and days in that particular year respectively. Suppose a consumer subscribes for the period from $dd/ mm/ yyyy$ to $dd'/ mm'/ yyyy$. It is obvious that this period can be represented by two leaf nodes on the tree, which define the start and end of the subscription respectively. For instance, the subscription period from $07/01/2023$ to $20/08/2023$ can be represented by the leaf nodes $m1w1d7$ and $m8w3d6$ in the tree. The producer $\mathcal{P}$ computes the root nodes of the minimum cover sets in the tree that covers all the leaf nodes between the start and end dates of a subscription. For example, if the subscription period of a consumer is $01/01/2023-31/01/2023$, the minimum cover sets in the tree that covers all the leaf nodes between $m1w1d1$ and $m1w4d7$ will be $\{m1\}$. Again, the root nodes of the minimum cover sets in the tree that covers all the leaf nodes in between $m1w1d2$ and $m8w3d6$ for the subscription period $07/01/2023$ -- $24/08/2023$ are $\{m1w1d7, m1w2, m1w3, m1w4, m2, m3, m4, m5, m6, m7, m8w1, \\m8w2, m8w3d1\}$. After assigning the minimum cover set to the consumer, say $\mathtt{ID_u}$, the producer $\mathcal{P}$ assigns a unique private key set $\mathtt{SK_{ID_u}}$ to the consumer in each node in the minimum cover set. Next, we briefly present a high-level overview of our designed encryption process.
    \par 
    Let us assume $n_i$ is a node in the subscription-based access policy tree $\mathcal{T}$ as shown in Figure \ref{node-key-tree}. Suppose, the producer $\mathcal{P}$ wants to share the content with all the consumers who have the node $n_i$ in their minimum cover sets. The producer $\mathcal{P}$ computes a random value $x_i$, termed as \emph{encryption random secret value} for every $i^{th}$ encryption process of some content $m$ using its master secret and some publicly available parameters. The encryption random secret value is then embedded into the encrypted content $m$ in such a way that anyone who can recompute $x_i$ can recover the access key for decryption. In our designed decryption process, any consumer $\mathtt{ID_u}$ who has the node $n_i$ in his/her minimum cover set can recompute $x_i$ using his/her unique private key $\mathtt{SK_{ID_u}}$, which enables the consumer to recover the access key to obtain the plain content $m$. In our scheme, the producer $\mathcal{P}$ chooses a path from the root to a leaf node in the tree as an access policy and embeds the access policy in the encrypted content. For example, let us say that the producer $\mathcal{P}$'s current date is $7/01/2023$ and wants to share some content, then it will use the path $\{m1w1d7, m1w1, m1, Jan'23- Dec'23\}$ as an access policy for encryption. As such, our encryption scheme needs to be extended from a single node to multiple nodes to satisfy the path $\tau$. Similar to the single node encryption process, the producer $\mathcal{P}$ computes a separate encryption random secret value for every node in the path $\tau$. To embed these secret values into the encrypted content $m$, the producer $\mathcal{P}$ uses the secret values as the roots to compute a polynomial $P(x)$ to share the access key using the SIFF technique \cite{Zheng1993}. This process enables any consumer having one of the nodes in the path $\tau$ in their minimum cover set to recompute the proper encryption random secret value, say $x_i$. Then the consumer can recover the access key from the polynomial $P(x)$ and finally decrypts the encrypted content $m$. Note that the nodes in the chosen path $\tau$ will always have at least one common node in the minimum cover set of all consumers having a valid subscription (i.e., the chosen path belongs to the consumer's subscription period).

  \vspace{-.3cm}
	\subsection{Main Construction}
	\label{main-construction}
 \vspace{-.2cm}
	Our proposed scheme is divided into five phases, namely \emph{Producer Setup}, \emph{Consumer Registration}, \emph{Data Publication}, \emph{Content Request and Consumer Authentication}, and \emph{Decryption}, which are presented next in details.
\vspace{-.4cm}
	\subsubsection{Producer Setup}
	\label{setup}
	The producer $\mathcal{P}$ initiates this phase for the initialization of the system. It generates public parameters $\mathtt{PP}$, master secrets $\mathtt{MS}$, and time-based subscription access policy tree $\mathcal{T}$ for the system. The producer $\mathcal{P}$ keeps the public parameters $\mathtt{PP}$ in its public bulletin board and master secrets $\mathtt{MS}$ in a secure place. The producer $\mathcal{P}$ chooses a large prime number $q$ and two multiplicative bilinear groups $\mathbb{G}_1$ and $\mathbb{G}_T$ of order $q$. It also chooses a bilinear map function $\hat{e}: \mathbb{G}_1\times \mathbb{G}_1\rightarrow \mathbb{G}_T$ and a collision-resistant hash functions $H_1: \{0, 1\}^*\rightarrow \mathbb{Z}^*_q$. The producer $\mathcal{P}$ chooses random numbers $(\sigma, \delta, \kappa, \varkappa)\in \mathbb{Z}_q^*$. It also chooses a random number $\eta_i\in \mathbb{Z}_q^*$ for each node $i$ in the access tree $\mathcal{T}$. Afterward, the producer $\mathcal{P}$ computes, $Y_1= \hat{e}(g, g)^{\kappa}; Y_2= \hat{e}(g, g)^{\varkappa}$. Finally, the producer $\mathcal{P}$ generates the public parameters $\mathtt{PP}$ and master secrets $\mathtt{MS}$ as follows:
	\begin{align*}
	    \mathtt{PP}=& \big<q, \mathbb{G}_1, \mathbb{G}_T, H_1:\{0, 1\}^*\rightarrow \mathbb{Z}^*_q, H_2:\{0, 1\}^*\times \mathbb{Z}_q^*\rightarrow \mathbb{Z}^*_q, \hat{e}, g, Y_1,Y_2\big>\\
	    \mathtt{MS}=& \left<\sigma, \delta, \kappa, \varkappa, \{\eta_i\}_{\forall i\in \mathcal{T}}\right>
	\end{align*}
\vspace{-1cm}
	\subsubsection{Consumer Registration}
    \label{consumer_registration}
  In this phase, the producer $\mathcal{P}$ assigns access rights in the form of private keys to the subscribed consumers. Suppose, a consumer $\mathtt{ID_u}$ wants to subscribe for a time period from $dd- mm- yyyy$ to $dd'- mm'- yyyy'$. The producer $\mathcal{P}$ computes the root nodes of the minimum cover sets in the tree that covers all the leaf nodes of the subscribed period (please refer to Section \ref{overview}). Let the minimum cover set be $\mathbb{CS}$. The producer $\mathcal{P}$ issues a private key set $\mathtt{SK_{ID_u}}= \big<\mathtt{UK_{ID_u}}, \{\mathtt{TK^{1i}_{ID_u}}, \mathtt{TK^{2i}_{ID_u}}\}_{\forall i\in \mathbb{CS}}\big>$ for the whole subscription period, where $\mathtt{UK_{ID_u}}\in \mathbb{Z}_q^*$ a random number and 
    \begin{align*}
        \mathtt{TK^{1i}_{ID_u}}=& g^{\frac{\delta\cdot \mathtt{UK_{ID_u}}}{\eta_i\cdot \eta_i}}\cdot g^{\frac{\sigma\cdot H_1(\mathtt{ID_u})}{\eta_i}}; \mathtt{TK^{2i}_{ID_u}}= g^{\kappa\cdot \mathtt{UK_{ID_u}}\cdot H_1(t_i)}\cdot g^{\varkappa\cdot H_1(\mathtt{ID_u}|| t_i)}
    \end{align*}
Finally, the producer securely sends the private key, $\mathtt{SK_{ID_u}}$ to the consumer. Please note that $t_i$ represents the identity for the $i^{th}$ node in the access tree $\mathcal{T}$.

  \vspace{-.2cm}
    \subsubsection{Data Publication}
    \label{data-publication}
    In this phase, the producer $\mathcal{P}$ publishes its contents to its consumers. The producer $\mathcal{P}$ encrypts the contents before publishing using our proposed encryption mechanism. Only authorized consumers having a proper subscription can decrypt the ciphertext using their private keys. The main idea of our encryption mechanism is to encrypt the actual plaintext content using a secure symmetric-key encryption mechanism with a random secret key, and the secret key is then shared with all the consumers who have a valid subscription. As such, the same encrypted content can be shared with multiple authorized consumers making efficient utilization of NDN resources.
\par 
 The producer $\mathcal{P}$ first encrypts the requested content, say $M$ using a secure symmetric key encryption technique (e.g., AES) with a random symmetric key $\mathtt{K}\in \mathbb{Z}_q^*$ (or access key). The producer $\mathcal{P}$ then selects the leaf node that represents the current date (or publication date) in the time-based subscription access policy tree. The producer $\mathcal{P}$ gets the path, say $\tau$ from the selected leaf node to the root node of the tree and computes the ciphertext components $\big<C_1, \{C_i, x_i\}_{\forall i\in \tau}\big>$ after choosing a random number $r\in \mathtt{Z}_q^*$, where $i$ represents a node in the path $\tau$ and $x_i$ represents the encryption random secret value.
    \begin{align*}
        C_1=& \hat{e}(g, g)^{\sigma\cdot r}; C_i= g^{\eta_i\cdot r}; x_i= \hat{e}(g, g)^{\frac{\delta\cdot r}{\eta_i}}
    \end{align*}
    \par 
    Now the producer $\mathcal{P}$ generates a polynomial $P(x)$ using SIFF scheme (please refer to Section \ref{SIFF}), where $P(x)= x^n+ a_1x^{n-1}+ \cdots +a_ix^{n-i}+\cdots + a_n$.
   For the equation $P(x)= \mathtt{K}$, the producer $\mathcal{P}$ uses all $\{x_i\}_{\forall i\in \tau}$ as the roots to work out all the coefficients denoted as $\mathbb{A}=\{a_1, a_2, \cdots, a_n\}$. Now the producer $\mathcal{P}$ broadcasts the ciphertext $\mathbb{CT}= \left<\mathtt{Enc_{K}}(M), C_1, \{C_i\}_{\forall i\in \tau}, \mathbb{A}\right>$. Please note that the producer $\mathcal{P}$ can choose fresh secret keys to encrypt the actual contents per path $\tau$ in the access policy tree\footnote{We can also enable the producer to choose a fresh secret key to encrypt the actual content and recompute the polynomial $P(x)$ (using the SIFF method) every time the producer receives a fresh content interest request.}. This will reduce the impact of secret key disclosure/compromise. 

    
    \par 
    Our scheme considers a hierarchical content naming method similar to the one defined in \cite{Zhang2014}. For instance, ``\emph{/com/test/$\tau$/abc.mp4/chunk\_1}", where \emph{/com/test/} represents the producer’s domain name (e.g., test.com), \emph{$\tau$} represents the current path in the access tree $\mathcal{T}$, \emph{/abc.mp4} represents a file name, and \emph{/Chunk\_1} specifies the data chunks of the file.
    \vspace{-.2cm}
    \subsubsection{Content Request and User Authentication}
    The main goal of this phase is to verify the authenticity and freshness of a consumer's request so that no bogus interest request can enter the network. This phase has two sub-phases, namely \emph{Content Request} and \emph{Consumer Authentication}, which are described next.
    
 
    \paragraph{Content Request}
    \label{content_request}
    A consumer initiates this phase. In this phase, the consumer generates an interest packet\footnote{The format of the interest packet is similar to the one described in \cite{Zhang2014}} for his/her desired content. The consumer also generates a signature using his/her private key associated with the subscription, requested content name, and current timestamp. Later, the consumer sends both the interest packet and the signature to the nearest edge router. The edge router forwards the interest packet only if it can successfully verify the signature. More detail on the signature verification method is explained in Section \ref{user_auth}. The consumer, say $\mathtt{ID_u}$ generates a signature $\mathtt{Sig_{ID_u}}= \big<S_1, S_2, S_3, S_4, t_i\big>$. The consumer $\mathtt{ID_u}$ chooses a random number $v\in \mathbb{Z}_q^*$ and computes
    
    \begin{align*}
        S_1=& \left[\mathtt{UK_{ID_u}}+ \frac{H_1(ts|| CN)}{H_1(t_i)}\right]\cdot v; S_2=(\mathtt{TK^{2i}_{ID_u}})^v= g^{\left[\kappa\cdot \mathtt{UK_{ID_u}}\cdot H_1(t_i) + \varkappa\cdot H_1(\mathtt{ID_u}|| t_i)\right]v}\\
        S_3=& (Y_1)^{(-v)}= \hat{e}(g, g)^{-\kappa\cdot v}; S_4= (Y_2)^{v\cdot H_1(\mathtt{ID_u}|| t_i)}= \hat{e}(g, g)^{\varkappa\cdot v\cdot H_1(\mathtt{ID_u}|| t_i)}
    \end{align*}
    Here, $CN, ts$ represent the content name and current timestamp. Note that, the consumer uses the current timestamp $ts$ to prevent any replay attack so that the same signature cannot be reused after the predefined time interval. We shall discuss this in the next phase. Also, every signature is randomized using a random number $v$. As such, no entity can link two or more signatures from the same consumer and is able to reveal the real identity of the consumer. 
    \paragraph{Consumer Authentication}
    \label{user_auth}
    After receiving the interest packet and the signature $\mathtt{Sig_{ID_u}}$ from the consumer $\mathtt{ID_u}$, the edge router verifies the signature to check the authenticity of the request in terms of the freshness of the signature, subscription rights and requested content. To verify the freshness of the signature for preventing replay attacks, the edge router uses its current timestamp $ts'$ and checks if the difference between its current timestamp $ts$ and the timestamp ($ts$) received from the consumer is within $\Delta t$, where $\Delta t$ is a predefined value. Otherwise, it aborts the request. As such, no entity can reuse the same signature beyond the threshold value, i.e., $\Delta t$. Next, the edge router verifies the access right of the consumer by using the public information associated with the subscribed time period. As such, this process does not reveal the real identity of the requesting consumers. The edge router verifies the signature and computes $V_1, V_2, V_3$ and $V_4$, where
    \begin{align*}
        V_1=& \hat{e}(S_2, g^{\frac{1}{H_1(t_i)}})= \hat{e}(g^{\left[\kappa\cdot \mathtt{UK_{ID_u}}\cdot H_1(t_i) + \varkappa\cdot H_1(\mathtt{ID_u}|| t_i)\right]v}, g^{\frac{1}{H_1(t_i)}})\\
        =& \hat{e}(g, g)^{\kappa\cdot \mathtt{UK_{ID_u}}\cdot v}\cdot \hat{e}(g, g)^{\frac{\varkappa\cdot H_1(\mathtt{ID_u}|| t_i)\cdot v}{H_1(t_i)}}\\
        V_2=& (Y_1)^{S_1}= \hat{e}(g, g)^{\kappa\cdot \left[\mathtt{UK_{ID_u}}+ \frac{H_1(ts|| CN)}{H_1(t_i)}\right]\cdot v}= \hat{e}(g, g)^{\kappa\cdot \mathtt{UK_{ID_u}}\cdot v}\cdot \hat{e}(g, g)^{\frac{\kappa\cdot  H_1(ts|| CN)\cdot v}{H_1(t_i)}}\\
        V_3=& \frac{V_1}{V_2}= \frac{\hat{e}(g, g)^{\kappa\cdot \mathtt{UK_{ID_u}}\cdot v}\cdot \hat{e}(g, g)^{\frac{\varkappa\cdot H_1(\mathtt{ID_u}|| t_i)\cdot v}{H_1(t_i)}}}{\hat{e}(g, g)^{\kappa\cdot \mathtt{UK_{ID_u}}\cdot v}\cdot \hat{e}(g, g)^{\frac{\kappa\cdot  H_1(ts|| CN)\cdot v}{H_1(t_i)}}}
        = \hat{e}(g, g)^{\frac{\varkappa\cdot H_1(\mathtt{ID_u}|| t_i)\cdot v}{H_1(t_i)}}\cdot \hat{e}(g, g)^{-\frac{\kappa\cdot  H_1(ts|| CN)\cdot v}{H_1(t_i)}}\\
        V_4=& (S_4)^{\frac{1}{H_1(t_i)}}\cdot (S_3)^{\frac{H_1(ts|| CN)}{H_1(t_i)}}= \hat{e}(g, g)^{\frac{\varkappa\cdot v\cdot H_1(\mathtt{ID_u}|| t_i)}{H_1(t_i)}}\cdot \hat{e}(g, g)^{-\frac{\kappa\cdot H_1(ts|| CN)\cdot v}{H_1(t_i)}}
    \end{align*}
    The edge router compares $V_3\stackrel{?}{=} V_4$. If $V_3== V_4$ the consumer is successfully authenticated; otherwise, the edge router drops the interest request. We can see that, the edge router uses the hash of a node's identity ($H_1(t_i)$) in $V_1$ and $V_3$. Similarly, the edge router uses the hash of the concatenation of the timestamp $ts$ and content name $CN$ in $V_4$. As such, if the verification is successful, it indicates that all the information such as timestamp, content name, and subscription rights are legitimate; and the consumer is allowed to access the content. It also enables our scheme to prevent replay attacks by suitably configuring $\Delta t$ to a small value, as per the security best practice. Further, as the signature verification is performed using public parameters unrelated to the actual consumer, the real identity of the consumer $\mathtt{ID_u}$ is not revealed or can be linked to.
    \subsubsection{Decryption}
    \label{decryption}
    Once the interest packet is forwarded by the edge router to the network, eventually the consumer will receive his/her requested content either from the routers' caches or the actual producer. In this phase, the consumer receives the requested ciphertext and accesses the plaintext content after decryption. Let's assume that, the consumer receives the ciphertext, $\mathbb{CT}= \big<\mathtt{Enc_{K}}(M), \\C_1, \{C_i\}_{\forall i\in \tau}, \mathbb{A}\big>$ from the NDN network. The consumer finds a proper private key, $\mathtt{TK^{1i}_{ID_u}}$ for the path $\tau$ associated with the ciphertext $\mathbb{CT}$ and computes the followings:
    
    \begin{align*}
        D^i_1=& \hat{e}(\mathtt{TK^{1i}_{ID_u}}, C_i)= \hat{e}(g^{\frac{\delta\cdot\mathtt{UK_{ID_u}}}{\eta_i\cdot \eta_i}}\cdot g^{\frac{\sigma\cdot H_1(\mathtt{ID_u})}{\eta_i}}, g^{\eta_i\cdot r})= \hat{e}\left(g, g\right)^{\frac{\delta\cdot \mathtt{UK_{ID_u}}\cdot r}{\eta_i}}\cdot \hat{e}(g, g)^{\sigma\cdot H_1(\mathtt{ID_u})\cdot r}\\
        D^i_2=& \frac{D^i_1}{(C_1)^{H_1(\mathtt{ID_u})}}= \frac{\hat{e}\left(g, g\right)^{\frac{\delta\cdot \mathtt{UK_{ID_u}}\cdot r}{\eta_i}}\cdot \hat{e}(g, g)^{\sigma\cdot H_1(\mathtt{ID_u})\cdot r}}{\hat{e}(g, g)^{\sigma\cdot H_1(\mathtt{ID_u})\cdot r}}= \hat{e}\left(g, g\right)^{\frac{\delta\cdot \mathtt{UK_{ID_u}}\cdot r}{\eta_i}}\\
    x'_i=& (D^i_2)^{\frac{1}{\mathtt{UK_{ID_u}}}}= \hat{e}\left(g, g\right)^{\frac{\delta\cdot r}{\eta_i}}
    \end{align*}
    \par 
    Finally, the consumer computes $P(x'_i)$ and recovers the symmetric key $\mathtt{K}$ if and only if he/she has a valid subscription in the form of private keys. The consumer then gets the plaintext content $M$ from $\mathtt{Enc_{K}}(M)$ after decryption.

\begin{remark}
The scheme presented in Section \ref{main-construction} provides a time (or expiration)-based privilege revocation mechanism. However, there may be situations where consumers wish to unsubscribe, or the CP needs to revoke access rights before the subscription’s natural expiration. In such cases, it becomes necessary to immediately revoke the consumer's existing access rights to prevent further access to future content using their current access tokens (private keys). Due to page limitations, we present our immediate privilege revocation mechanism in Appendix \ref{sec:revocation}. 
\end{remark}



\begin{table*}[!t]
\centering
\tiny
\caption{Computation, Communication, and Storage Overhead Comparison}
\label{comp-table}
\begin{tabular}{|c|c|p{3cm}|p{3cm}|p{3cm}|}
\hline
\multicolumn{2}{|l|}{} &  \cite{ZHU2020607} &\cite{Xia2019} & Ours  \\ \hline
\multicolumn{1}{|c|}{\multirow{2}{*}{Req. \& Auth.}} & Req.  &n/a & $2T_p+ T_{de}$ &   $T_{g_1}+ 2T_{g_t}$ \\ \cline{2-5} 
\multicolumn{1}{|c|}{} & Auth. &  n/a & $4T_p+ 3T_{dec}$ & $T_{g_1}+ 3T_{g_t}+ T_p$   \\ \hline
\multicolumn{2}{|l|}{Re-Key Gen} & $4T_p+ 7T_{g_1}$ & n/a & n/a  \\ \hline
\multicolumn{2}{|l|}{Encryption} &  $(5+ 2|\tau''|)T_{g_1}+ T_p+ T_{se}$ & $4T_{g_1}+ 2T_{g_t}+ (|S|+ 2)T_{se}$ & $|\tau|\cdot T_{g_1}+ (|\tau|+ 1)T_{g_t}+ T_{se}$ \\ \hline
\multicolumn{2}{|l|}{Decryption} & $5T_p+ T_{g_1}+ T_{de}$ & $4T_p+ 3T_{de}$ & $2T_{g_t}+ T_p+ T_{de}$   \\ \hline
\multicolumn{2}{|l|}{Ciphertext Size} &  $|\mathbb{G}_T|+ (3+ 2|\tau''|)|\mathbb{G}_1|+ |\mathtt{Enc_K}|$  & $[(4|S|+ 1)|\mathtt{Enc_K}|+ 4|\mathbb{G}_1|]^{\#1}$ &   $|\tau||\mathbb{G}_1|+ |\mathbb{G}_T|+ (|\tau|+ 1)|\mathbb{Z}_q^*|+ |\mathtt{Enc_K}|$\\ \hline
\multicolumn{2}{|l|}{Re-Enc Cipher Size} &  $4|\mathbb{G}_T|+ (4+ |\tau''|)|\mathbb{G}_1|+ (2t+ 1)|\mathbb{Z}_q^*|+ |\mathtt{Enc_K}|$  & n/a &   n/a\\ \hline
\multicolumn{2}{|l|}{Consumer Key Size} & $4|\mathbb{G}|+ 2|\mathbb{Z}_q^*|$  & $2|S||\mathbb{G}_1|+ |S|\mathbb{Z}_q^*$& $2|\mathbb{CS}|\cdot |\mathbb{G}_1|+ |\mathbb{Z}_q^*|$\\ \hline
\multicolumn{2}{|l|}{Signature Size} &  n/a & $(2|\mathbb{G}_1+ 2|\mathbb{Z}_q^*|)^{\#3}$& $|\mathbb{Z}_q^*|+ |\mathbb{G}_1|+ 2|\mathbb{G}_T|$  \\ \hline
\end{tabular}
\\ \raggedright  n/a: not applicable; $|\mathbb{CS}|$: the number of nodes in the minimum cover set $\mathbb{CS}$; $|\tau|$: height of the subscription tree $\mathcal{T}$; $\tau''$: number of access subscription time \cite{ZHU2020607}; $t$: degree of a polynomial \cite{ZHU2020607}; $|S|$: number of samples for challenge-response protocol. $\#1$: $(2|S|+ 1)|\mathtt{Enc_K}|+ 2|\mathbb{G}_1|$ overhead for the challenge samples and $(2|S|+ 1)|\mathtt{Enc_K}|+ 2|\mathbb{G}_1|$ overhead for the time tokens; $\#2$: this is additional storage and communication overhead. This is the overhead incurs for sending consumers' information to the requested ER by the producer for signature verification; $\#3$: here we consider the communication overhead incurred by the challenge-response protocol.
\end{table*}
\section{Security Analysis}
\label{security-analysis}
This section shows that our scheme is secure against Chosen Plaintext Attacks (CPA). 
\vspace{-.2cm}
\subsection{CPA Security}
The CPA security is defined by the following theorem and proof.
\begin{theorem}
\label{cpa}
If a probabilistic polynomial time (PPT) adversary $\mathcal{A}$ can win the CPA security game defined in Section \ref{security-model} with a non-negligible advantage $\epsilon$, then a PPT simulator $\mathcal{B}$ can be constructed to break DBDH assumption with non-negligible advantage $\frac{\epsilon}{2}$.
\end{theorem}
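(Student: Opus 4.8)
The plan is to construct a PPT simulator $\mathcal{B}$ that embeds a DBDH instance $\langle g, g^a, g^b, g^c, Z\rangle$ into the environment it presents to $\mathcal{A}$, so that $\mathcal{A}$'s ability to win the IND-CPA game of Section \ref{security-model} becomes $\mathcal{B}$'s ability to decide whether $Z=\hat{e}(g,g)^{a\cdot b\cdot c}$ or $Z=\hat{e}(g,g)^{z}$. First I would have $\mathcal{B}$ simulate the \emph{Producer Setup}: it publishes $\mathtt{PP}$ and implicitly fixes the master secrets $\langle \sigma, \delta, \kappa, \varkappa, \{\eta_i\}\rangle$ so that $g^a$ and $g^b$ are folded into the parameters governing the challenge node $i^{*}\in\tau^{*}$ that $\mathcal{A}$ targets (for instance tying $\delta/\eta_{i^{*}}$ to the product $ab$), while the ciphertext randomness $r$ is identified with $c$ through $g^c$. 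Under this programming the encryption random secret value $x_{i^{*}}=\hat{e}(g,g)^{\delta r/\eta_{i^{*}}}$ is forced to coincide with $\hat{e}(g,g)^{abc}$, which is exactly the quantity the DBDH challenge hides inside $Z$.

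Next I would handle the \emph{Phase 1} and \emph{Phase 2} private-key queries. For each queried identity $\mathtt{ID_u}$ whose subscription (minimum cover set $\mathbb{CS}$) does not cover the challenge path $\tau^{*}$, $\mathcal{B}$ must return a well-formed $\mathtt{SK_{ID_u}}=\langle \mathtt{UK_{ID_u}}, \{\mathtt{TK^{1i}_{ID_u}}, \mathtt{TK^{2i}_{ID_u}}\}\rangle$ without knowing $a,b,c$ explicitly. The idea is to choose the per-query secret $\mathtt{UK_{ID_u}}$ together with the implicit $\eta_i$ of the queried nodes so that the unknown $ab$-dependent terms occurring in $\mathtt{TK^{1i}_{ID_u}}=g^{\delta\,\mathtt{UK_{ID_u}}/(\eta_i\eta_i)}\cdot g^{\sigma H_1(\mathtt{ID_u})/\eta_i}$ cancel, leaving an expression computable from $g^a$, $g^b$ and known exponents. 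The cover-set restriction enforced by the game is precisely what makes this possible: since no queried node lies on $\tau^{*}$, the embedded product $ab$ never has to appear in isolation inside a key, and $\mathcal{B}$ can rerandomize to obtain a correctly distributed key. The component $\mathtt{TK^{2i}_{ID_u}}$ is simulated analogously using the public structure of $Y_1,Y_2$.

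For the challenge phase, when $\mathcal{A}$ submits $M_0, M_1$, $\mathcal{B}$ flips a fair coin $b$, encrypts $M_b$ symmetrically under the key $\mathtt{K}$ determined by the SIFF polynomial, and assembles $\mathbb{CT}^{*}=\langle \mathtt{Enc_K}(M_b), C_1, \{C_i\}_{\forall i\in\tau^{*}}, \mathbb{A}\rangle$ in which the root associated with $i^{*}$ is set to $Z$ instead of $\hat{e}(g,g)^{\delta r/\eta_{i^{*}}}$. If $Z=\hat{e}(g,g)^{abc}$ the ciphertext is a faithful encryption of $M_b$ and $\mathcal{A}$'s view matches the real game, so $\mathcal{A}$ guesses correctly with probability $\tfrac12+\epsilon$; if $Z=\hat{e}(g,g)^{z}$ then $\mathtt{K}$ is information-theoretically independent of $b$, so $\mathcal{A}$ succeeds with probability exactly $\tfrac12$. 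Letting $\mathcal{B}$ output ``DBDH tuple'' when $\mathcal{A}$ wins and ``random'' otherwise yields advantage $\tfrac12\big((\tfrac12+\epsilon)+\tfrac12\big)-\tfrac12=\tfrac{\epsilon}{2}$, matching the claim.

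The step I expect to be the main obstacle is the key-query simulation. I must verify that for every admissible subscription the two key components can be generated with the correct joint distribution while keeping $abc$ hidden, and in particular that the quadratic occurrence of $\eta_i$ in the exponent of $\mathtt{TK^{1i}_{ID_u}}$ (the $1/(\eta_i\eta_i)$ factor) does not force $\mathcal{B}$ to know a forbidden combination of $a$ and $b$. Making the implicit assignment of $\delta,\sigma,\eta_i$ and $\mathtt{UK_{ID_u}}$ line up consistently across the setup, every key query, and the challenge ciphertext is where the real care is needed; the advantage computation and the freshness/replay aspects of the authentication phase are routine by comparison and do not enter the CPA argument.
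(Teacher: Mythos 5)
Your overall architecture coincides with the paper's reduction in Appendix~\ref{security_proof_appendix}: identify the ciphertext randomness $r$ with $c$, tie $\delta/\eta_{i^*}$ to $ab$ so that the challenge value $x_{i^*}=\hat{e}(g,g)^{\delta r/\eta_{i^*}}$ is the DBDH target, substitute $Z$ for it in the SIFF polynomial, and conclude with the advantage computation $\tfrac12\bigl((\tfrac12+\epsilon)+\tfrac12\bigr)-\tfrac12=\tfrac{\epsilon}{2}$, which is literally the paper's final step. The gap is in the step you yourself flag as the obstacle: the key-query simulation, and the mechanism you propose for it cannot work. You suggest choosing $\mathtt{UK_{ID_u}}$ ``together with the implicit $\eta_i$'' so that the $ab$-dependent terms in $\mathtt{TK^{1i}_{ID_u}}=g^{\delta\cdot\mathtt{UK_{ID_u}}/(\eta_i\cdot\eta_i)}\cdot g^{\sigma H_1(\mathtt{ID_u})/\eta_i}$ cancel. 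But with your own embedding ($r=c$ and $\delta/\eta_{i^*}$ a known multiple of $ab$) there are only two cases. Either $\eta_{i^*}$ is a known scalar and $\delta$ is a known multiple of $ab$; since $\delta$ is a single global master secret, \emph{every} key for \emph{every} node then contains the factor $g^{ab\cdot\mathtt{UK_{ID_u}}/\eta_i^2}$, which is not computable from $g^a,g^b$ (pairing only yields $\hat{e}(g,g)^{ab}$ in $\mathbb{G}_T$, not $g^{ab}$ in $\mathbb{G}_1$), and no choice of the known scalars $\mathtt{UK_{ID_u}},\eta_i$ that merely multiply $ab$ can remove it. Or $\delta$ is known and $\eta_{i^*}$ is a known multiple of $\delta/(ab)$; then the challenge component $C_{i^*}=g^{\eta_{i^*}c}$ requires $g^{c/(ab)}$ and is itself uncomputable. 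Your appeal to the cover-set restriction does not rescue the first case: nodes off the challenge path carry exactly the same uncomputable factor, because $\delta$ is shared by all nodes.

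The cancellation has to come from the only other (additive) term in the exponent, $\sigma H_1(\mathtt{ID_u})/\eta_i$, and that is precisely what the paper does: the simulator programs the per-identity quantity $\varphi_x$ (playing the role of $\sigma\cdot H_1(\mathtt{ID_x})$) as $\varphi_x=b\theta-\frac{ab\cdot\mathtt{sk_{id_x}}}{\vartheta_x}$, so that
\begin{equation*}
\mathtt{TK^{1x}_{ID_x}}=g^{\frac{ab\cdot\mathtt{sk_{id_x}}}{\vartheta_x\cdot\vartheta_x}}\cdot g^{\frac{\varphi_x}{\vartheta_x}}=g^{\frac{b\theta}{\vartheta_x}}=B^{\frac{\theta}{\vartheta_x}}
\end{equation*}
is computable from $B$ alone. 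Note that this cancellation works for every node --- the paper's simulation never invokes your ``no queried node lies on $\tau^*$'' argument --- and it implicitly amounts to programming the hash value $H_1(\mathtt{ID_x})$, a consistency issue your sketch also leaves untreated (as does your one-line claim that $\mathtt{TK^{2i}_{ID_u}}$ is ``simulated analogously''). So while your embedding and probability analysis are right, as written your reduction cannot answer a single private-key query; repairing it requires the paper's additive-term programming, not a choice of $\mathtt{UK_{ID_u}}$ and $\eta_i$.
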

\begin{proof}
In this proof, we consider an adversary $\mathcal{A}$ against our scheme with an advantage $\frac{\epsilon}{2}$. We will construct a PPT simulator $\mathcal{S}$ that will interact with the adversary $\mathcal{A}$ to break our scheme. The detailed proof is presented in Appendix \ref{security_proof_appendix}.
\end{proof}

\begin{table}[t]
\centering
\caption{Storage and Communication Overhead (in bytes)}
\begin{tabular}{|c|c|c|c|c|}
\hline
\multicolumn{2}{|c|}{} & \cite{ZHU2020607} & \cite{Xia2019} & Ours \\ \hline
\multicolumn{2}{|c|}{Ciphertext Size$^*$} & $832$ & $640$ & $484$\\ \hline
 \multicolumn{2}{|c|}{ReEnc Cipher Size$^*$} & $1204$ & n/a & n/a \\ \hline
 \multicolumn{2}{|c|}{Consumer Key Size}  & $296$ & $592$ & $532$\\ \hline
 \multicolumn{2}{|c|}{Signature Size} & n/a & $168$ & $340$\\\hline
\end{tabular}
\label{storage_bytes}
 \vspace{-.5cm}
\end{table}




\section{Performance Analysis}
\label{performance-analysis}
\vspace{-.4cm}
This section analyses the performance of our scheme and compares it with the existing temporal-based Xia \emph{et al.}'s \cite{Xia2019} and Zhu \emph{et al.}'s \cite{ZHU2020607} schemes. We consider only Xia \emph{et al.}'s \cite{Xia2019} and Zhu \emph{et al.}'s \cite{ZHU2020607} schemes, as they are the only schemes that are closely related to ours. Please note that our scheme is better than \cite{Xia2019} and \cite{ZHU2020607} in terms of functionality and security (please refer to Section \ref{related-work}). We start this section by providing a theoretical performance analysis in Section \ref{theory-performance} and then the experimental results in Section \ref{experiments}. 

\vspace{-.4cm}
\subsection{Theoretical Performance Analysis}
\label{theory-performance}
\vspace{-.2cm}
Table \ref{comp-table} shows a comparison between our scheme and \cite{Xia2019}, \cite{ZHU2020607} in terms of computation, communication, and storage overhead. We consider the most frequently operated phases such as \emph{Consumer Request and Authentication} (i.e., \emph{Req. \& Auth.}), \emph{Encryption}, \emph{Re-Key Gen} (which is one of the most frequently operated phases in \cite{ZHU2020607}), and \emph{Decryption} for comparing computation overhead. The computation cost is shown in terms of expensive cryptographic operations, namely exponentiation operation ($T_{g_1}, T_{g_t}$), pairing operation ($T_p$), symmetric key encryption operation ($T_{se}$), and symmetric key decryption operation ($T_{de}$). The computation cost shown in Table \ref{comp-table} represents the asymptotic upper bound in the worst cases. The size of the private key of a consumer, ciphertext, re-encrypted ciphertext (\emph{Re-Enc Cipher}), and signature are considered for comparison of the communication and storage overhead. The comparison is done in terms of group element size (i.e., $|\mathbb{Z}_q^*|, |\mathbb{G}_1|, |\mathbb{G}_T|$) and symmetric key encryption ciphertext size (i.e., $|\mathtt{Enc_K}|$). Table \ref{storage_bytes} presents the storage and communication overhead comparison caused by the ciphertext, consumer key and signatures in terms of bytes. To make the comparison compatible among the schemes, we set the following parameters $|\tau|= |\tau''|= |S|= \mathbb{CS}= t= 4$. Note that we have not considered the overhead caused by the actual encrypted content (i.e., $\mathtt{Enc_{K}}(M)$) in Table \ref{storage_bytes} for measuring the ciphertext size and ReEnc Cipher Size to show the actual overhead difference between our scheme, \cite{Xia2019} and \cite{ZHU2020607} (as $\mathtt{Enc_{K}}(M)$ will be the same to all the three schemes).
\par 
Our scheme takes three group exponentiation operations to generate a signature. It also takes four exponentiation and one pairing operation for the verification of the signature at the edge routers. Note that \cite{ZHU2020607} scheme does not provide the consumer's interest request authentication mechanism. As such, \cite{ZHU2020607} does not provide resistance against DoS attacks like our scheme does. The encryption cost in the \emph{Data Publication} phase of our scheme depends on the height, $|\tau|$ of the subscription tree. We observe that with a height of $4$, our scheme can support the subscription-based access policies for popular real-world video streaming services like Netflix. Table \ref{comp-table} also shows that our scheme takes less encryption cost compared with \cite{ZHU2020607}. On the other hand, the encryption cost in \cite{Xia2019} depends on the number of \emph{challenge-response verification tokens}, computed using a symmetric key encryption algorithm. It takes less computation cost compared to ours due to the less expensive symmetric key encryption operations. However, in \cite{Xia2019}, the producer requires to send samples of the challenge-response tokens of size $(2|S|+ 1)|\mathtt{Enc_K}|+ 2|\mathbb{G}_1|$ to each edge router before the edge routers can process consumers' requests. This increases heavy communication overhead in the system unlike ours. Moreover, unlike our scheme, \cite{ZHU2020607} requires an additional phase, i.e., \emph{Re-Key Gen} to generate re-encryption keys for the content distribution servers. Thus, this phase introduces additional computation overhead in the system. In our scheme, a user requires to perform two group exponentiation, one pairing, and one symmetric key decryption operation in the \emph{Decryption} phase. It can be observed that the decryption cost of our scheme is better than \cite{ZHU2020607} and \cite{Xia2019}. 

\par 

The ciphertext size in our scheme depends on the height of the subscription tree. It can be observed from Table \ref{comp-table} and Table \ref{storage_bytes} that the ciphertext size of our scheme is smaller than \cite{Xia2019}, \cite{ZHU2020607}. As such, our scheme incurs less communication and storage overhead in the system. Further, unlike our scheme, in \cite{ZHU2020607}, the content distribution servers first re-encrypt the ciphertexts before sending them to the requested consumers. The re-encrypted ciphertext size increases linearly with the number of access subscription times and degree of a polynomial associated with a ciphertext, which is larger than the ciphertext size of our scheme (please refer to Table \ref{storage_bytes}). A secret key size, that a consumer maintains, in our scheme depends on the number of nodes in the minimum cover set of the subscription tree based on his/her subscription duration. It can be observed from Table \ref{comp-table} and Table \ref{storage_bytes} that a consumer in \cite{ZHU2020607} requires to maintain fewer secret keys than our scheme. However, as we mentioned earlier, \cite{ZHU2020607} does not support any authentication mechanism for the consumer's interest request to prevent DoS attacks. We can observe from Table \ref{storage_bytes} that the signature size of our scheme is larger than \cite{Xia2019}. However, \cite{Xia2019} is based on a challenge-response protocol that requires the producer to send samples of the challenge-response tokens of size $(2|S|+ 1)|\mathtt{Enc_K}|+ 2|\mathbb{G}_1|$ to each edge router before the edge routers can process consumers' requests. This increases heavy communication overhead in the system unlike ours. 

\pgfplotstableread{
id height time
1   3    0.009403717518
2   5    0.01504255533
3   7    0.01914775372
4   9    0.02269904613

}\compdatas

\begin{figure}[!t]
\centering
    \begin{tikzpicture}
    \pgfplotsset{width=7.5cm,height=2.9cm,compat=1.14}
        \begin{axis} [symbolic x coords={3, 5, 7, 9},
            xtick={3, 5, 7, 9},  
            xticklabel style={text height=2ex}, 
            xlabel={Height},
            ylabel={Time(s)},
            ymin=0, ymax=0.03,
            enlarge x limits=0.04,
            enlarge y limits = .2,
            legend style={at={(0.5,1.1)}, anchor=north, legend columns=-1},
            ] 
        
        \addplot[y, blue, mark=*, smooth] table[x index=1, y index=2] \compdatas;
        \legend{Our Scheme};
        \end{axis}
    \end{tikzpicture}
    \caption{Data publication time for varied height of subscription access tree}
    \label{fig:data_pub_time}
\end{figure}
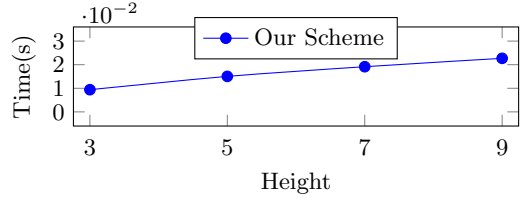
\pgfplotstableread{
id height time
1   1    0.002614784241
2   4    0.00525187254
3   7    0.007980036736
4   11    0.01127227545

}\compdatarev

\begin{figure}[!t]
\centering
    \begin{tikzpicture}
    \pgfplotsset{width=7.5cm,height=2.9cm,compat=1.14}
        \begin{axis} [symbolic x coords={1, 4, 7, 11},
            xtick={1, 4, 7, 11},  
            xticklabel style={text height=2ex}, 
            xlabel={Number of revoked nodes},
            ylabel={Time(s)},
            ymin=0, ymax=0.03,
            enlarge x limits=0.04,
            enlarge y limits = .2,
            legend style={at={(0.5,1.1)}, anchor=north, legend columns=-1},
            ] 
        \addplot[y, blue, mark=*, smooth] table[x index=1, y index=2] \compdatarev;
        


        \legend{Our Scheme};
        \end{axis}
    \end{tikzpicture}
    \caption{Consumer key update time with varied revoked nodes}
    \label{fig:consumer_key_update_time}
\end{figure}
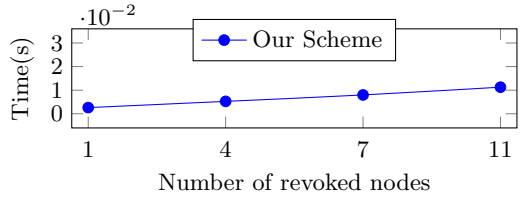

\subsection{Experiments for Cryptographic Algorithms and Network Emulation}
\label{experiments}
\begin{figure*}[!t]
    \centering
    \includegraphics[width=.95\linewidth]{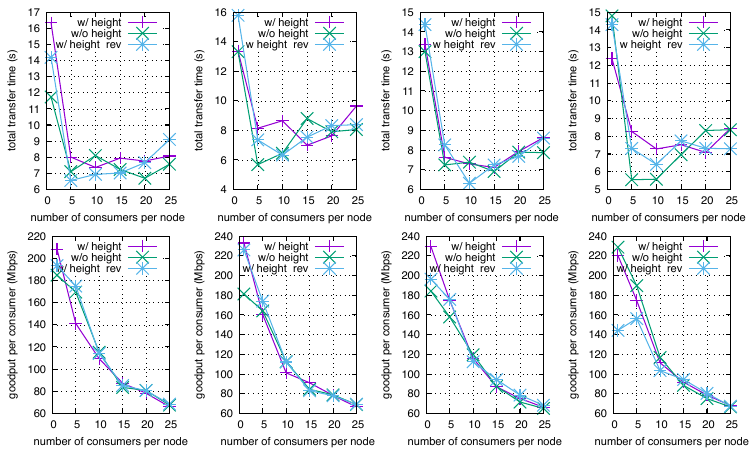}
    \caption{Average total File Transfer Time and Average Goodput with an increasing consumer number}
    \label{fig:transfer-time-goodput}
\end{figure*}
\subsubsection{Experiments for Cryptographic Algorithms:} We implemented our scheme using Charm\footnote{\url{https://github.com/JHUISI/charm}}, Python 3.10.12, and the symmetric group ``SS512" to effectively measure the computational cost of the cryptographic operations. For symmetric key encryption, we utilized AES-256 from the \emph{pycryptodome} library\footnote{\url{https://pypi.org/project/pycryptodome/}}. All experiments were conducted on a standard laptop running Ubuntu 20.04 (64-bit) with a 3GHz Intel Core i5 processor and 20GB of RAM. The reported results are averaged over 20 trials for accuracy.
\par 
Figure~\ref{fig:data_pub_time} shows the data publication time for our scheme. This time is measured by varying the height of the subscription access policy tree. As the tree height increases, the size of the subscription path list grows, leading to a linear increase in data publication time. Figure~\ref{fig:consumer_key_update_time} illustrates the consumer key update time as a function of the number of revoked nodes for our immediate privilege revocation mechanism (please refer to Appendix \ref{sec:revocation}). As expected, the key update time increases linearly with the number of revoked nodes due to the proportional growth of the header size. 
\par 
Additionally, our scheme exhibits constant time performance for key functionalities like content request signing, user authentication, and decryption, with average times of 1ms, 1.7ms, and 0.7ms, respectively, across 20 trials. This efficient performance highlights the practicality and reliability of our approach in real-world applications.


 \begin{figure*}[!t]
    \centering
    \includegraphics[width=.45\linewidth]{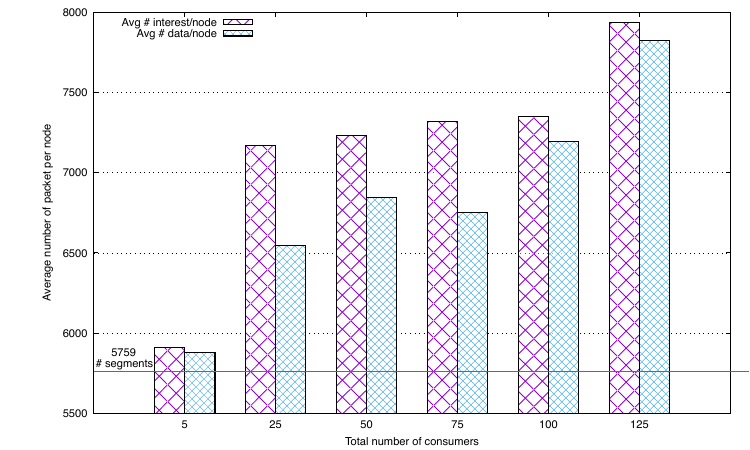}
    \caption{Average Interest and Data count per consumer}
    \label{fig:packet-count}
    \vspace{-.5cm}
\end{figure*}

\begin{figure*}[!t]
    \centering
    \includegraphics[width=.9\linewidth]{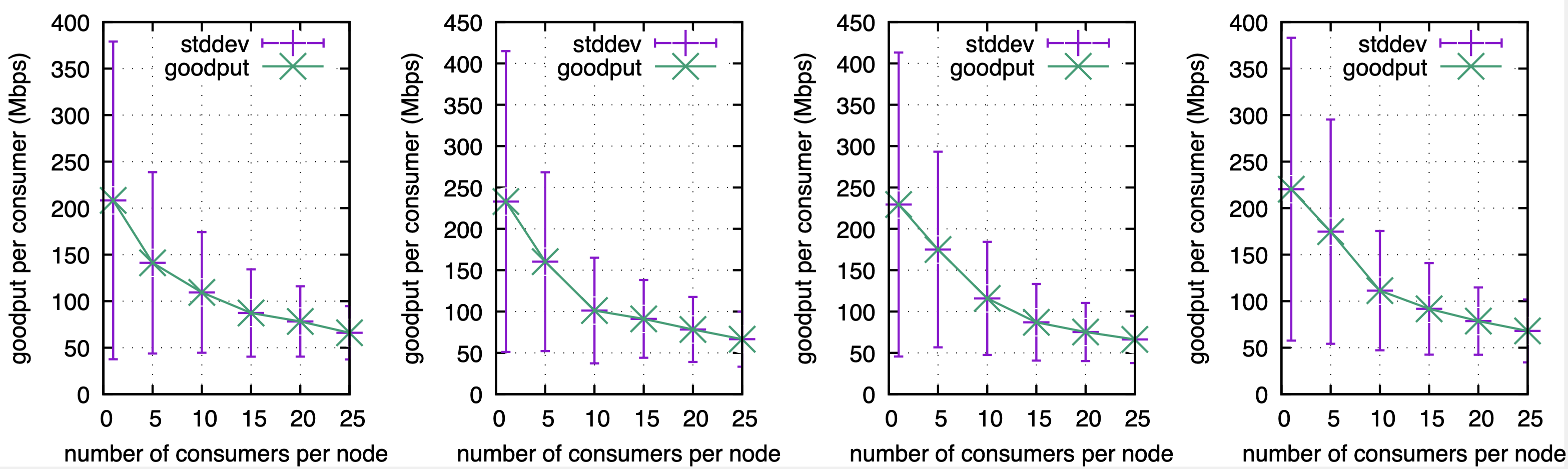}
    \caption{Average Goodput and Standard Deviation per consumer node }
    \label{fig:goodput-std}
    \vspace{-.5cm}
\end{figure*}

		
		


\subsubsection{MiniNDN based Network Emulation}
\label{sec:miniNDN-network-emulation}
We further evaluated the communication overhead of our scheme using Mini-NDN, a lightweight NDN emulator built on Mininet. Using the NDN testbed topology with 37 nodes and 97 links (shown in Figure \ref{topology} in Appendix \ref{Appendix-NDN-topology}), our experiments provided insights into the system's practical performance. We utilized the default link bandwidth and routing protocols provided by Mini-NDN, constrained only by the system's available resources. This setup allowed us to simulate real-world conditions effectively, ensuring that our scheme performs efficiently in a realistic network environment.
\vspace{-.7cm}
\paragraph{Emulation Setup }
Our experiment was designed to thoroughly assess the performance of our proposed scheme, using a producer and a varying number of consumers across three scenarios: (i) the producer published a $46$MB video file as the baseline, (ii) the producer published the video file along with ciphertext files of varying sizes (3, 5, 7, and 9 segments), and (iii) the producer published the video file, ciphertext files, and a revocation file to emulate privilege revocation.
\par 
In the first scenario, we measured the system's baseline performance by transmitting plaintext video content over the NDN network. The second scenario evaluated the system's performance with our proposed scheme applied, while the third scenario assessed the impact of adding the privilege revocation mechanism. The goal was to demonstrate that the performance of our scheme (Scenarios $2$ and $3$) remains close to the baseline, indicating its efficiency even with security mechanisms in place.
\par 
Using the default NDN packet size of $8$Kb, the producer published a total of $5,759$ segments—$5,757$ for the video file, and $1$ segment each for the revocation and height files. Consumers retrieved these segments based on the experiment setup. Furthermore, each consumer transmitted a $340$-byte signature to the producer for verification, included as an Interest parameter, which had no significant impact on performance measurements. 
\par 
To gauge consumer experience, we employed the following key performance metrics:

\begin{itemize}
    \item \textit{File Transfer Time}: The total time it takes for a consumer to fetch a complete file.
    \item \textit{Average Packet Count}: The average number of packets recorded at a node, including transmitted and received Interest or Data packets.
    \item \textit{Goodput}: The ratio of the transmitted file size to the file transfer time, representing efficiency.
\end{itemize}
These metrics provided a comprehensive analysis of the system’s performance, showcasing the robustness and responsiveness of our scheme under various conditions
\vspace{-.5cm}
\subsubsection{Emulation Results} In this section, we present the results of our emulation.
\vspace{-.5cm}
\subsubsection{File Transfer} Figure \ref{fig:transfer-time-goodput} shows the file transfer times across three scenarios: (i) the baseline case, (ii) with height (ciphertext) files only, and (iii) with both height and revocation files. The results demonstrate that there is no significant difference in transfer times across these scenarios, highlighting the efficiency of our scheme in maintaining similar transfer times even when additional security mechanisms are applied. The slight increase in transfer time for a single consumer compared to scenarios with multiple consumers is attributed to the averaging effect. As more consumers join, the first consumer may experience a marginally longer initial retrieval time, but subsequent consumers benefit from caching, leading to significantly faster overall transfer times. Importantly, the increase in consumer count does not result in longer transfer times, thanks to the benefits of aggregation and caching.
\vspace{-.5cm}
\subsubsection{Average Packet Count} Figure \ref{fig:packet-count} depicts the average packet count per consumer node. We evaluated packet counts for different numbers of consumers, with each fetching 5,759 segments (including video, height, and revocation files). For five consumers, we observed a slight increase in the number of Interest and Data packets (5,912 and 5,879 respectively) compared to the expected 5,759 segments, primarily due to application-layer retransmissions for lost Interest packets. However, Interest aggregation and data caching significantly reduced the packet counts as the number of consumers increased. For instance, with 125 consumers, the Interest count was far lower than the expected 719,875 Interests without any optimization. Most importantly, our scheme introduced minimal overhead, adding only two additional segments to the original content, demonstrating its lightweight nature.
\vspace{-.4cm}
\subsubsection{Goodput} Figures \ref{fig:transfer-time-goodput} and \ref{fig:goodput-std} present the average Goodput for each consumer across the three scenarios. Consistent with the file transfer times, all scenarios exhibit similar levels of Goodput, affirming that our scheme does not introduce noticeable performance overhead. While Goodput is generally higher with fewer consumers, it decreases slightly as the number of consumers grows due to the increasing network load. However, this effect is mitigated by local caching, allowing subsequent consumers to achieve higher Goodput after the first consumer fetches content directly from the producer. The standard deviation in Figure \ref{fig:goodput-std} highlights the variability in Goodput among consumers, with later consumers benefiting more from caching.
\vspace{-.4cm}
\section{Conclusion}
\label{conclusion}
\vspace{-.2cm}
In this paper, we proposed an encryption-based access control scheme for NDN that embeds time-based subscription access policies directly into the ciphertext, making it well-suited for real-world applications such as Netflix-like services. Additionally, we designed an anonymous signature-based authentication mechanism that enables edge routers to verify content requests without revealing consumers' identities. This prevents unauthorized or bogus interest requests from entering the network, while also ensuring the unlinkability of consecutive requests from the same consumer. Our scheme is formally proven secure against chosen-plaintext attacks (CPA), and both theoretical and experimental analyses demonstrate its practicality, with efficient computation, communication, and storage overhead compared to related works. Overall, our scheme enhances security and privacy in NDN while maintaining strong performance, making it a promising solution for time-based subscription services.



\begin{thebibliography}{10}

\bibitem{Zhang2014}
L.~Zhang~\textit{et al.}
\newblock {Named Data Networking}.
\newblock {\em SIGCOMM Comput. Commun. Rev.}, 44:66–73, Jul. 2014.

\bibitem{Jacobson2006}
V.~Jacobson.
\newblock {A New Way to look at Networking}.
\newblock {\em Google Tech Talk}, August 2006.

\bibitem{DJAMA202037}
A.~Djama \emph{et al.}
\newblock {Information-Centric Networking solutions for the Internet of Things: A systematic mapping review}.
\newblock {\em Comp. Comm.}, 159:37--59, 2020.

\bibitem{Tourani2018}
R.~{Tourani} \emph{et al.}
\newblock {Security, Privacy, and Access Control in Information-Centric Networking: A Survey}.
\newblock {\em Comm. Survey Tutorials}, 20(1):566--600, Firstquarter 2018.

\bibitem{Li2015}
Q.~{Li} \emph{et al.}
\newblock {LIVE: Lightweight Integrity Verification and Content Access Control for Named Data Networking}.
\newblock {\em IEEE Tran. on Info. Foren. and Sec.}, 10(2):308--320, Feb 2015.

\bibitem{Fotiou2016}
N.~Fotiou and G.~C. Polyzos.
\newblock Securing content sharing over icn.
\newblock In {\em Proc. of the 3rd ACM Conference on Information-Centric Networking}, page 176–185, 2016.

\bibitem{Suksomboon2018}
K.~{Suksomboon et al.}
\newblock {In-Device Proxy Re-encryption Service for Information-Centric Networking Access Control}.
\newblock In {\em Proc. of LCN}, pages 303--306, Oct 2018.

\bibitem{Bilal2019}
M.~{Bilal} and S.~{Pack}.
\newblock {Secure Distribution of Protected Content in Information-Centric Networking}.
\newblock {\em IEEE Sys. Journal}, pages 1--12, 2019.

\bibitem{Misra2019}
S.~{Misra} et~al.
\newblock {AccConF: An Access Control Framework for Leveraging In-Network Cached Data in the ICN-Enabled Wireless Edge}.
\newblock {\em IEEE Tran. on Dep. and Sec. Comp.}, 16(1):5--17, Jan 2019.

\bibitem{Xia2019}
Q.~{Xia} et~al.
\newblock {TSLS: Time Sensitive, Lightweight and Secure Access Control for Information Centric Networking}.
\newblock In {\em Proc. of GLOBECOM}, pages 1--6, Dec 2019.

\bibitem{Xue2019}
K.~{Xue} et~al.
\newblock {A Secure, Efficient, and Accountable Edge-Based Access Control Framework for Information Centric Networks}.
\newblock {\em IEEE/ACM Tran. on Netw.}, 27(3):1220--1233, June 2019.

\bibitem{SultanESORICS}
N.~H. Sultan~\emph{et al.}
\newblock {An Accountable Access Control Scheme for Hierarchical Content in Named Data Networks with Revocation}.
\newblock In {\em Proc. of ESORICS 2020}, pages 569--590, 2020.

\bibitem{SultanSRDS}
N.~H. Sultan~\emph{et al.}
\newblock {A Secure Access and Accountability Framework for Provisioning Services in Named Data Networks}.
\newblock In {\em Proc. of SRDS 2021}, Sept. 2021.

\bibitem{He2021}
P.~He~\emph{et al.}
\newblock {FASE: Fine-grained Accountable and Space-Efficient Access Control for Multimedia Content with In-network Caching}.
\newblock {\em IEEE Trans. on Network and Service Management}, pages 1--1, 2021.

\bibitem{ZHU2020607}
L.~Zhu \emph{et al.}
\newblock {T-CAM: Time-based content access control mechanism for ICN subscription systems}.
\newblock {\em Future Generation Computer Systems}, 106:607--621, 2020.

\bibitem{Vimercati2012}
D.~C. d.~Vimercati~\emph{et al.}
\newblock {Enforcing Subscription-Based Authorization Policies in Cloud Scenarios}.
\newblock In {\em Proc. of DBSec'12}, pages 314--329, 2012.

\bibitem{He2018}
P.~{He} et~al.
\newblock {LASA: Lightweight, Auditable and Secure Access Control in ICN with Limitation of Access Times}.
\newblock In {\em Proc. of ICC'28}, pages 1--6, May 2018.

\bibitem{Tourani2018a}
R.~{Tourani}, R.~{Stubbs}, and S.~{Misra}.
\newblock {TACTIC: Tag-Based Access ConTrol Framework for the Information-Centric Wireless Edge Networks}.
\newblock In {\em Proc. of ICDCS'18}, pages 456--466, July 2018.

\bibitem{Nunes2018}
I.~O. Nunes and G.~Tsudik.
\newblock {KRB-CCN: Lightweight Authentication and Access Control for Private Content-Centric Networks}.
\newblock In {\em Proc. of ACNS'18}, pages 598--615, 2018.

\bibitem{Zhijun2019}
Zhijun~\emph{et al.} Wu.
\newblock Chtds: A cp-abe access control scheme based on hash table and data segmentation in ndn.
\newblock In {\em Proc. of TrustCom'19}, pages 843--848, 2019.

\bibitem{YANG201921}
H.~Yang \emph{et al.}
\newblock {Securing content-centric networks with content-based encryption}.
\newblock {\em {Journal of Network and Computer Applications}}, 128:21--32, 2019.

\bibitem{Tseng2019}
Y.~{Tseng} \emph{et al.}
\newblock {FGAC-NDN: Fine-Grained Access Control for Named Data Networks}.
\newblock {\em IEEE Tran. on Net. and Ser. Manag.}, 16(1):143--152, March 2019.

\bibitem{Atallah2007}
M.~J. Atallah, M.~Blanton, and K.~B. Frikken.
\newblock Incorporating temporal capabilities in existing key management schemes.
\newblock In {\em ESORICS 2007}, pages 515--530, 2007.

\bibitem{Crampton2011}
J.~Crampton.
\newblock Practical and efficient cryptographic enforcement of interval-based access control policies.
\newblock {\em ACM Trans. Inf. Syst. Secur.}, 14(1), June 2011.

\bibitem{Alderman2017}
J.~Alderman, N.~Farley, and J.~Crampton.
\newblock Tree-based cryptographic access control.
\newblock In {\em Proc. of the ESORICS'17}, pages 47--64, 2017.

\bibitem{Ferrara2021}
A.~L. Ferrara, F.~Paci, and C.~Ricciardi.
\newblock Verifiable hierarchical key assignment schemes.
\newblock In {\em Proc. of DBSec'21}, pages 357--376, 2021.

\bibitem{Zhou2013}
L.~{Zhou}, V.~{Varadharajan}, and M.~{Hitchens}.
\newblock {Achieving Secure Role-Based Access Control on Encrypted Data in Cloud Storage}.
\newblock {\em IEEE Trans. on Information Forensics and Security}, 8(12):1947--1960, 2013.

\bibitem{Crescenzo1999}
G.~Di~Crescenzo, R.~Ostrovsky, and S.~Rajagopalan.
\newblock Conditional oblivious transfer and timed-release encryption.
\newblock In {\em EUROCRYPT '99}, pages 74--89, 1999.

\bibitem{Chan2005}
A.C.~F. Chan and I.F. Blake.
\newblock Scalable, server-passive, user-anonymous timed release cryptography.
\newblock In {\em Proc. of ICDCS'05}, pages 504--513, 2005.

\bibitem{Cheon2006}
J.~H. Cheon~\emph{et al.}
\newblock Timed-release and key-insulated public key encryption.
\newblock In {\em Proc. of FC'06}, pages 191--205, 2006.

\bibitem{Choi2020}
G.~Choi and S.~Vaudenay.
\newblock Timed-release encryption with master time bound key.
\newblock In {\em Proc. of the Information Security Applications}, pages 167--179, 2020.

\bibitem{Paterson2010}
K.~G. Paterson and E.~A. Quaglia.
\newblock Time-specific encryption.
\newblock In {\em Proc. of SCN'10}, pages 1--16, 2010.

\bibitem{Kasamatsu2012}
K.~Kasamatsu \emph{et al.}
\newblock Time-specific encryption from forward-secure encryption.
\newblock In {\em Proc. of SCN'12}, pages 184--204, 2012.

\bibitem{Ishizaka2020}
M.~Ishizaka and S.~Kiyomoto.
\newblock Time-specific encryption with constant-size secret-keys secure under standard assumption.
\newblock 2020.
\newblock \url{https://eprint.iacr.org/2020/595}.

\bibitem{Sahai2005}
A.~Sahai and B.~Waters.
\newblock Fuzzy identity-based encryption.
\newblock In {\em Advances in Cryptology -- EUROCRYPT 2005}, pages 457--473, 2005.

\bibitem{Yang2016}
K.~Yang \emph{et al.}
\newblock Time-domain attribute-based access control for cloud-based video content sharing: A cryptographic approach.
\newblock {\em IEEE Transactions on Multimedia}, 18(5):940--950, 2016.

\bibitem{Zhu2012}
Y.~Zhu \emph{et al.}
\newblock Towards temporal access control in cloud computing.
\newblock In {\em Proc. of the IEEE INFOCOM}, pages 2576--2580, 2012.

\bibitem{LIU2014}
Q.~Liu, G.~Wang, and J.~Wu.
\newblock Time-based proxy re-encryption scheme for secure data sharing in a cloud environment.
\newblock {\em Information Sciences}, 258:355--370, 2014.

\bibitem{Balani2014}
N.~Balani and S.~Ruj.
\newblock Temporal access control with user revocation for cloud data.
\newblock In {\em Proc. of the TrustCom'14}, pages 336--343, 2014.

\bibitem{Liu2018}
J.~K.~Liu \emph{et al.}
\newblock Time-based direct revocable ciphertext-policy attribute-based encryption with short revocation list.
\newblock In {\em Proc. of ACNS'18}, pages 516--534, 2018.

\bibitem{Zheng1993}
Y.~Zheng, T.~Hardjono, and J.~Pieprzyk.
\newblock The sibling intractable function family ({SIFF}): notion, construction and applications.
\newblock {\em IEICE Tran. on Fund. of Elect., Commu. and Comp. Sci.}, 76(1):4--13, 1993.

\bibitem{testbed}
J.~DeHart.
\newblock {NDN Testbed Snapshot}.
\newblock \url{http://ndndemo.arl.wustl.edu/}, Accessed: 02- June- 2024.

\end{thebibliography}

		\begin{subappendices}\renewcommand{\thesection}{\Alph{section}}
	\section{Security Model}
	\label{security_model_appendix}

 Our scheme consists of five phases, namely \emph{Producer Setup}, \emph{Consumer Registration}, \emph{Data Publication}, \emph{Content Request and Consumer Authentication}, and \emph{Decryption}. The producer initiates the \emph{Producer Setup} phase to generate public parameters and master secrets for the system. In \emph{Consumer Registration} phase, the producer issues access tokens in the form of private keys associated with the subscriptions of the consumers. In \emph{Content Request and Consumer Authentication} phase, the consumer generates interest packets and signatures for the desired contents, and ER verifies the signatures. A consumer initiates the \emph{Decryption} phase to decrypt the received encrypted contents from the producer. Further details are given in Section \ref{proposed-scheme}.\par 
	We define a security model for \emph{Semantic Security against Chosen Plaintext Attacks} (IND-CPA) for our scheme. IND-CPA is illustrated using the following security game between a challenger $\mathcal{C}$ and an adversary $\mathcal{A}$.
	\begin{itemize}
	    \item \textsc{Setup-} Challenger $\mathcal{C}$ runs \emph{Producer Setup} phase to generate public parameters and master secrets. Challenger $\mathcal{C}$ sends the public parameters to the adversary $\mathcal{A}$.
	    \item \textsc{Phase 1-} Adversary $\mathcal{A}$ sends a challenged time period $t_x$. Challenger $\mathcal{C}$ runs \emph{Consumer Registration} phase and generates a private key $\mathtt{SK^{t_x}_{ID_u}}$. Challenger $\mathcal{C}$ sends the private key to the adversary $\mathcal{A}$. An adversary can send queries in polynomial time. 
	    
	    \item \textsc{Challenge-} After \textsc{Phase 1} is over, the adversary $\mathcal{A}$ sends a challenged time period $t^*_i$, and two equal length messages $\mathtt{m_0}$ and $\mathtt{m_1}$ to the challenger $\mathcal{C}$. The challenger $\mathcal{C}$ flips a random coin $\mu\in \{0, 1\}$ and encrypts $\mathtt{m_\mu}$ by initiating \emph{Data Publication} phase. Challenger $\mathcal{C}$ sends the ciphertext of $\mathtt{m_\mu}$ to the adversary $\mathcal{A}$. 
	    \item \textsc{Phase 2-} Same as \textsc{Phase 1}.
	    \item \textsc{Guess-} Adversary $\mathcal{A}$ outputs a guess $\mu'$ of $\mu$. The advantage of wining the game for the adversary $\mathcal{A}$ is $\mathtt{Adv^{IND-CPA}}= |Pr[\mu'= \mu]- \frac{1}{2}|$.
	\end{itemize}
	\begin{definition}
	The proposed scheme is semantically secure against Chosen Plaintext Attack if $\mathtt{Adv^{IND-CPA}}$ is negligible for any polynomial-time adversary $\mathcal{A}$.
	\end{definition}
	\begin{remark}
	 In \textsc{Phase 1}, the adversary $\mathcal{A}$ is also allowed to send queries for signature generation. In our security game, the simulator $\mathcal{B}$ gives all the private keys to the adversary $\mathcal{A}$. As such, the adversary can answer all the signature generation queries by itself. Therefore, we do not include the signature generation oracle (i.e., \emph{Content Request and Consumer Authentication}) in \textsc{Phase 1}. 
	\end{remark}

 \section{CPA Security proof}
 \label{security_proof_appendix}

\begin{proof}
In this proof, we consider an adversary $\mathcal{A}$ against our scheme with an advantage $\frac{\epsilon}{2}$. We will construct a PPT simulator $\mathcal{S}$ that will interact with the adversary $\mathcal{A}$ to break our scheme. 
\par 
The main goal of the adversary $\mathcal{A}$ is to find a valid value of the polynomial $P(x)$ to recover the secret key, $\mathtt{K}$ (please refer to Section \ref{data-publication}). Therefore, we have slightly modified the security game for the original scheme. We will allow the simulator $\mathcal{S}$ to send the polynomial, instead of its exponential function like in the original scheme, as a response to the adversary $\mathcal{A}$. Also, we let the adversary $\mathcal{A}$ to challenge a single time interval at a time for the simplicity of our proof. However, it can easily be extended for multiple time intervals. 
\par 
The DBDH challenger $\mathcal{C}$ sends a tuple $\left<A= g^a, B= g^b, C= g^c, Z\right>$ to the simulator $\mathcal{S}$, where the challenger $\mathcal{C}$ randomly chooses $l\in \{0, 1\}$ and $(a, b, c, z)\in \mathbb{Z}_q^*$. It computes $Z= \hat{e}(g, g)^{abc}$ if $l= 0$; otherwise it computes $Z= \hat{e}(g, g)^z$. Now the simulator $\mathcal{S}$ acts as a challenger in the rest of the security game. At the end of the security game, simulator $\mathcal{S}$ outputs $l$.
\par 
\textbf{\textsc{Setup}} Simulator $\mathcal{S}$ chooses random numbers $(\{\vartheta_i, \varrho_i\}_{\forall i\in \mathcal{T}}, \psi, \phi, \\ \theta, \{\mathtt{sk_{id_i}}\}_{\forall i\in \mathbb{U}})\in \mathbb{Z}_q^*$. Simulator $\mathcal{S}$ computes $\hat{e}(A, B)= \hat{e}(g, g)^{a\cdot b}$. It sets $\{\varphi_i =b\cdot\theta-  \frac{ab\cdot \mathtt{sk_{id_i}}}{\vartheta_i}\}_{\forall i\in \mathcal{T}}$. 

Now, the simulator computes $Y_1= \hat{e}(g, g)^{\psi}; Y_2= \hat{e}(g, g)^{\phi}$.
Simulator $\mathcal{S}$ sends $\big<\mathtt{PP}= \big<q, \mathbb{G}, \mathbb{G}_T, \mathbb{Z}_q^*, \hat{e}, g, H_1, Y_1, Y_2\big>\big>$ to the adversary $\mathcal{A}$.
\par 

\textbf{\textsc{Phase 1}} Adversary $\mathcal{A}$ submits a challenged time period $t_x$ to the simulator $\mathcal{S}$. Simulator $\mathcal{S}$ computes $\mathtt{TK^{1x}_{ID_x}}= g^{\frac{a\cdot b\cdot \mathtt{sk_{id_x}}}{\vartheta_x\cdot \vartheta_x}}\cdot g^{\frac{\varphi_x}{\vartheta_x}}= g^{\frac{b\cdot \theta}{\vartheta_x}}= B^{\frac{\theta}{\vartheta_x}}, \mathtt{TK^{2x}_{ID_x}}= g^{\psi\cdot \mathtt{sk_{id_x}}\cdot H_1(t_x)}\cdot g^{\phi\cdot \varrho_x}$. Simulator $\mathcal{S}$ sends $\big<\mathtt{sk_{id_x}}, \mathtt{TK^{1x}_{ID_x}}, \mathtt{TK^{2x}_{ID_x}}\big>$ to the adversary $\mathcal{A}$.
\par 
\textbf{\textsc{Challenge}} Adversary $\mathcal{A}$ sends two equal length messages $m_0$ and $m_1$ and a challenged time-interval $t^*$ to the simulator $\mathcal{S}$. It flips a random coin $\mu\in \{0, 1\}$ and computes $C_1= \hat{e}(g, g)^{\varphi_{t^*}\cdot c}= \hat{e}(g, g)^{(b\cdot \theta- \frac{ab\cdot \mathtt{sk_{id_u}}}{\vartheta_{t^*}})\cdot c}= \hat{e}(B, C)^{\theta}\cdot Z^{\frac{\mathtt{sk_{id_u}}}{\vartheta_{t^*}}}; C_{t^*}= g^{\vartheta_{t^*}\cdot c}= C^{\vartheta_{t^*}}, x_{t^*}= Z^{\frac{1}{\vartheta_{t^*}}}, P^{t^*}(x)= (x- x_{t^*})$. Finally, the simulator $\mathcal{S}$ sends the ciphertext $\mathbb{CT}= \big<C_1, C_{t^*}, P^*(x)\big>$ to the adversary $\mathcal{A}$. 
\textsc{Guess}-- The adversary $\mathcal{A}$ guesses a bit $\mu'$ and sends to the simulator $\mathcal{S}$. If $\mu'=\mu$ then the adversary $\mathcal{A}$ wins the CPA game; otherwise it fails. If $\mu'= \mu$, simulator $\mathcal{S}$ answers ``DBDH'' in the game (i.e. outputs $l= 0$); otherwise $\mathcal{S}$ answers ``random'' (i.e. outputs $l= 1$).
		\par 
		If $Z= \hat{e}(g, g)^{z}$; then $\mathbb{CT}_\mu$ is completely random from the view of the adversary $\mathcal{A}$. So, the received ciphertext $\mathbb{CT}_\mu$ is not compliant to the game (i.e. invalid ciphertext). Therefore, the adversary $\mathcal{A}$ chooses $\mu'$ randomly. Hence, probability of the adversary $\mathcal{A}$ for outputting $\mu'= \mu$ is $\frac{1}{2}$. 
		\par 
		
		If $Z= \hat{e}(g, g)^{abc}$, then adversary $\mathcal{A}$ receives a valid ciphertext. The adversary $\mathcal{A}$ wins the CPA game with a non-negligible advantage $\epsilon$ (according to Theorem \ref{cpa}).  So, the probability of outputting $\mu'= \mu$ for the adversary $\mathcal{A}$ is $\frac{1}{2}+ \epsilon$, where probability $\epsilon$ is for guessing that the received ciphertext is valid and probability $\frac{1}{2}$ is for guessing whether the valid ciphertext $\mathbb{CT}_\mu$ is related to ${m_0}$ or ${m_1}$.
		\par 
		Therefore, the overall advantage $\mathtt{Adv}$ of the simulator $\mathcal{S}$ is $\frac{1}{2}(\frac{1}{2}+ \epsilon+ \frac{1}{2})- \frac{1}{2}= \frac{\epsilon}{2}$.
\end{proof}

 \section{Immediate Privilege Revocation}
\label{sec:revocation}
The scheme presented in Section \ref{main-construction} establishes a time-based access control framework in which consumers' access privileges are automatically revoked when their subscriptions expire. This model is evident in real-world applications, such as video streaming services like Netflix. However, there may be situations where consumers wish to unsubscribe or content providers (CP) need to revoke access rights before the subscription's natural expiration. In such cases, it becomes necessary to immediately revoke the consumer's existing access rights to prevent further access to future content using their current access tokens (private keys). To incorporate immediate privilege revocation, our scheme introduces specific changes (additions) within the \emph{Producer Setup}, \emph{Consumer Registration}, \emph{Data Publication}, and \emph{Decryption} phases described in Section \ref{main-construction}, following an approach similar to that proposed in \cite{Misra2019}.


\subsubsection{Producer Setup}
CP generates a $t$ degree polynomial $u(x)$ using random coefficients $\{a_0, a_1, \cdots, a_t\}\in \mathbb{Z}^*_q$. It also chooses $\{x_i\in \mathbb{Z}^*_q\}_{0\leq i< (t-1)}$ and computes $\{u(x_i)\}_{0\leq i\leq (t-1)}$. It also chooses a generator $g_p$ of a cyclic group $\mathbb{Z}^*_p$. CP also chooses $n$, where $n< p$ and $p$ is a prime number, to compute $n$ additional points in $u(x)$ for $n$ consumers in the \emph{Consumer Registration} phase. We will present the \emph{Consumer Registration} phase next. CP keeps $\{x_i, u(x_i)\}_{0\leq i\leq (t-1)}$ in $\mathbb{E}$. 

\subsubsection{Consumer Registration}
For each registered consumer $\mathtt{ID_u}$, CP chooses a random number $x_u\in \mathbb{Z}_q^*$, computes $u(x_u)$, and shares the tuple $\big<x_u, u(x_u)\big>$ with the consumer along with the private key $\mathtt{SK_{ID_u}}$ (please refer to \emph{Consumer Registration} phase in Section \ref{consumer_registration}). 

\subsubsection{Data Publication}
During the data publication phase, CP chooses a random number $r\in \mathbb{Z}^*_q$, secret key $k\in \mathbb{Z}^*_p$ and computes $U= k\cdot g_p^{a_0\cdot r}, V= g_p^r$. CP also computes partial Lagrangian coefficients, $\Lambda= \{\lambda'_k =\prod_{0\leq j\not=k <t}\frac{x_j}{x_j- x_k}\}_{0\leq k<t}$. CP Computes $E= \big\{\big<x_i, g_p^{r\cdot u(x_i)}\big>\big\}_{\forall x_i\in \mathbb{E}}$. Finally, it generates an additional header $\mathsf{Hdr}= \big<U, V, E, \Lambda\big>$ which will be sent along with the ciphertext $\mathbb{CT}$ (please refer to Data Publication phase in Section \ref{data-publication}). Please note that CP can also broadcast the header $\mathsf{Hdr}= \big<U, V, E, \Lambda\big>$ to enable the non-revoked consumers to update their private keys.

\subsubsection{Decryption}
After receiving the ciphertext $\mathbb{CT}$ and the header $\mathsf{Hdr}= \big<U, V, E, \Lambda\big>$, the consumer first processes the $\mathsf{Hdr}$ to recover the secret key $k$. 
\par 
Consumer calculates Lagrangian coefficient $\{\lambda_k= \lambda'_k\cdot \frac{x_u}{x_u-x_k}\}_{\forall \lambda'_k\in \Lambda}$. He/she also calculates $\delta_1= \prod_{0\leq k<t}{(g_p^{r\cdot u(x_k)})^{\lambda_k}}$, where $g^{r\cdot u(x_k)}\in E$, Lagrangian coefficient $\lambda_u= \prod_{0\leq k<t}\frac{x_u}{x_u- x_k}$ and $\delta_2= V^{u(x_u)\cdot\lambda_u}= g_p^{r\cdot u(x_u)\cdot\lambda_u}$. The Consumer recovers the secret key, $k= \frac{U}{\delta_1\cdot \delta_2}$, where 
        \begin{align*}
            k=& \frac{U}{\delta_1\cdot \delta_2}= \frac{k\cdot g_p^{a_0\cdot r}}{\prod_{0\leq k<t}{(g_p^{r\cdot u(x_k)})^{\lambda'_k}}\cdot g_p^{r\cdot u(x_u)\cdot\lambda_u}}= \frac{k\cdot g_p^{a_0\cdot r}}{g_p^{r\sum^{k= t-1}_{k= 0} \lambda_k\cdot u(x_k)}\cdot g_p^{r\cdot u(x_u)\cdot\lambda_u}}\\
            =& \frac{k\cdot g_p^{a_0\cdot r}}{g_p^{r\cdot \big(u(x_0)\lambda_{x_0}+ u(x_1)\lambda_{x_1}+ \cdots + u(x_{t-1})\lambda_{x_{t-1}}+ u(x_u)\lambda_{x_u}\big)}}= \frac{k\cdot g_p^{a_0\cdot r}}{g_p^{a_0\cdot r}}
        \end{align*}

\par 
We would like to emphasize that our immediate privilege revocation process occurs infrequently. This is evident from real-world examples, such as video streaming services like Netflix, where only a small number of consumers are generally required to be revoked immediately on certain occasions. 
\section{NDN Testbed Topology}
\label{Appendix-NDN-topology}
For our experiments, we use the NDN testbed topology, illustrated in Figure \ref{topology}, within the mini-NDN emulation framework to create a realistic NDN environment. This topology consists of 37 nodes and 99 links, each with routing costs set by the Named-data Link State Routing Protocol (NLSR). With mini-NDN, we can effectively simulate and analyze NDN-specific behaviors and performance in a controlled and scalable way. This setup enables us to evaluate our scheme's performance in an environment that closely approximates real-world NDN network conditions. 
\begin{figure}[ht]
\centering
		\scalebox{5}{\includegraphics[width=2.5cm, height=2cm]{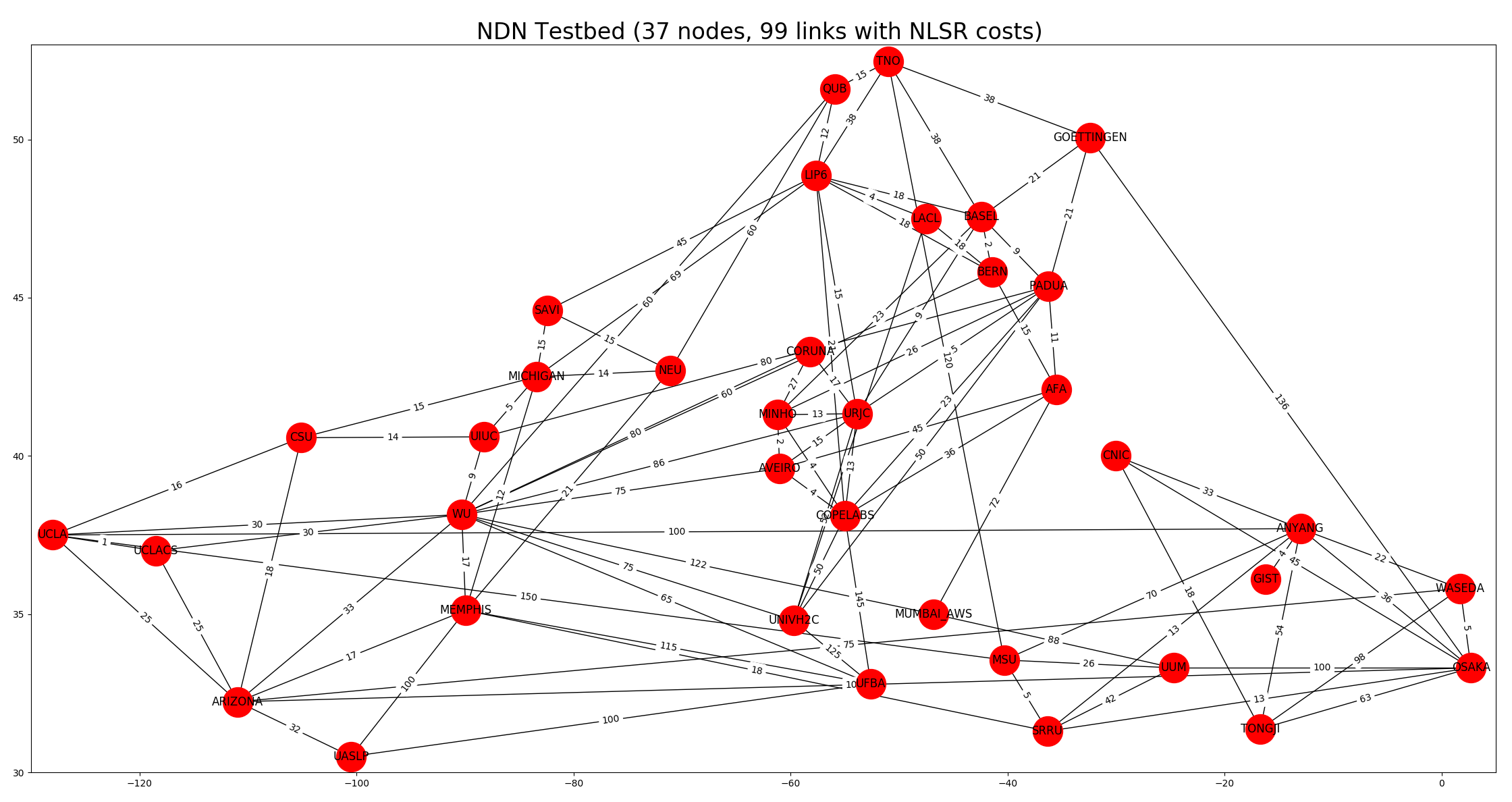}}
		\caption{NDN Testbed \cite{testbed}}
		\label{topology}
		\end{figure}

\end{subappendices}
\end{document}